\newtheorem{lem}{Lemma}
\newtheorem{thm}{Theorem}
\DeclareMathOperator{\Tr}{Tr}
\newcommand{\p}{\psi}
\newcommand{\la}{\lambda}\renewcommand{\l}{\lambda}
\newcommand{\e}{\mathbb{E}}
\newtheorem{rem}{Remark}
\begin{document}
\title[Spectrum]{Spectrum of SYK model }

\author[Feng, Tian and Wei]{Renjie Feng, Gang Tian, Dongyi Wei}
\address{Beijing International Center for Mathematical Research, Peking University, Beijing, China, 100871.}

\email{renjie@math.pku.edu.cn}
\email{gtian@math.pku.edu.cn}
\email{jnwdyi@pku.edu.cn}

\date{\today}
   \maketitle
    \begin{abstract}
This is the first part of a series of papers on the spectrum of the SYK model, which is a simple model of the black hole in physics literature. In this paper, we will give a  rigorous proof  of the almost sure convergence of the global density of the eigenvalues.  We also discuss the largest eigenvalue of the SYK model.

\end{abstract}

\section{Introduction}
In the 1990s, to study the new quantum phase which is called the quantum spin glass and non-Fermi liquid, Sachdev and Ye \cite{SY} proposed a  model that describes random interacting spins with infinite range. Based on this early work, the Sachdev-Ye-Kitaev (SYK) model, which is proposed by Kiteav \cite{K}, describes $n$ random interacting Majorana modes on a quantum dot, and suggest  the possible holographic description of the SYK model  after taking large $n$ limit.

The SYK model is a random matrix model where the main interest lies in the global and local behaviors of its eigenvalues in mathematics. Actually
physicists have many results regarding the spectrum of SYK, either theoretically or numerically \cite{black, GV, GV2, GV3,GV4, LNZ, MS,bulk spectrum Polchinski}. 
However, mathematical aspects of the SYK model were less studied.  In this paper, we will prove that the normalized  empirical measure of eigenvalues converges to some limiting measure with probability 1 (or almost surely), which can be viewed as a \emph{law of large numbers} in probability theory. In the end, we will discuss the asymptotic behavior of the largest eigenvalue. 

In our subsequent papers \cite{FTW2, FTW3}, we further derive two theorems about the spectrum  of the SYK model. The results are totally unknown in physics, but they are indeed the most fundamental and important theorems considered in random matrix theory. To be more precise, in \cite{FTW2}, we  prove the \emph{central limit theorem} for the linear statistic of  of eigenvalues  as $n\to\infty$ and derive an explicit formula for its variance. These results imply some useful information about the (global) 2-point correlation of the eigenvalues.  In \cite{FTW3},  for the special case of the Gaussian SYK model,  we  will derive a \emph{large deviation principle} for the normalized empirical measure of eigenvalues for $q_n=2$ (in which case it's a totally solvable system and  physicists do not care it too much, but it does have its own interest in random matrix theory) and a \emph{concentration of measure theorem} for general $q_n\geq 3$. 

\subsection{SYK model}
Throughout the article, let $n$ be an even integer.  Let's first assume $q_n$ is even and $ 2\leq {q_n}< n$.
Then we   consider the following Hamiltonian \begin{equation}\label{sykm}H=i^{q_n/2}\frac{1}{\sqrt{{{n} \choose {q_n}}}}\sum_{1\leq i_1<i_2< \cdots < i_{q_n} \leq n} J_{i_1i_2\cdots i_{q_n}}{\p}_{i_1}{\p}_{i_2}\cdots {\p}_{ i_{q_n}},\end{equation}
where the real random variables $J_{i_1i_2\cdots  i_{q_n}} $ are independent identically distributed (i.i.d.),  nondegenerated  and
$$\e J_{i_1i_2\cdots  i_{q_n}}=0,\,\,\,\, \e J_{i_1i_2\cdots  i_{q_n}}^2=1, $$
 and the $k$-th moment  of $|J_{i_1i_2\cdots  i_{q_n}}| $ is uniformly bounded for any fixed $k$;
$\p_j$ are Majorana fermions which obey the algebra
\begin{equation}\label{anticom}\{\p_i,\p_j\}:=\p_i\p_j+\p_j\p_i=2\delta_{ij}  .\end{equation}
 In fact, by the representation of the Clifford algebra, each $\p_i$ is a $2^{n/2}\times 2^{n/2}$ Hermitian matrix  generated by Pauli matrices  iteratively \cite{LM}. Let's denote $L_n=2^{n/2}$.
Note that we do not assume $J_{i_1i_2\cdots  i_{q_n}} $ to be Gaussian, the results are true for more general random variables.

Actually, the original SYK model is
\begin{equation}\label{sykm2}H_{SYK}=i^{q_n/2}\frac{1}{\sqrt{\frac {N^{q_n-1}} {(q_n-1)!J^2}}}\sum_{1\leq i_1<i_2< \cdots <i_{q_n} \leq n} J_{i_1i_2\cdots i_{q_n}}{\p}_{i_1}{\p}_{i_2}\cdots {\p}_{ i_{q_n}},\end{equation}
where the random variables $ J_{i_1i_2\cdots i_{q_n}}$ satisfy the same assumptions as in \eqref{sykm} and $J$ is some constant. There is no essential difference between \eqref{sykm} and \eqref{sykm2} other than a normalizing constant. In this paper, we will use the random matrix $ H$ instead of $H_{SYK}$.

\subsection{Global  density}
Let's first state the results on the global  density of the eigenvalues. Let  $\la_{i}, 1\leq i \leq L_n$ be the eigenvalues of $H$. One can check that $H$ is Hermitian by the anticommutative relation \eqref{anticom}, and thus $\la_{i}$ are real numbers.
The normalized empirical measure of the eigenvalues is defined as
	\begin{equation}\label{emph}
  \rho_n(\lambda)=\frac 1{L_n}\sum_{i} \delta_{\la_{i}}(\la).
	\end{equation}
 One of the main results in this paper is that $\rho_n$ will converge to a probability measure $\rho_\infty$ almost surely  in the sense of distribution, and there is a phase transition in the density of the states depending on the limit of the quotient $ q_n^2/n$,

\begin{thm}\label{main1}  Let $q_n,n$ be even and $2\leq q_n\leq n/2$. Let  the random variables $J_{i_1\cdots  i_{q_n}}$  be i.i.d. and nondegenerated with expectation 0 and variance $1$ and the $k$-th moment of $|J_{i_1\cdots  i_{q_n}}|$  is uniformly bounded for any $k$.    Then the normalized empirical measure $\rho_n$ of eigenvalues of the random matrix $H$ defined in \eqref{sykm} will converge to $\rho_\infty$ almost surely in the sense of distribution, where the probability measure $\rho_\infty$ is given explicitly as follows,   \\
1. When $ q_n^2 /  n\to 0$, then $\rho_\infty$ is the standard Gaussian distribution. \\
2. When  $ q_n^2 /  n\to a$, then $\rho_\infty$ has compact support with the density function
$$p_a(x) =
 \begin{cases}
    f(x| e^{-2a})   &\text{if $x\in [-\frac 2{\sqrt{1-e^{-2a}}},\frac 2{\sqrt{1-e^{-2a}}}]$,} \\
    0 &  \text{else,}
 \end{cases} $$
where the function  \begin{equation}\label{fs}f(x|y)=\frac{\sqrt{1-y}}{\pi\sqrt{1-(1-y)x^2/4}}
\prod_{k=0}^\infty\left[\frac{1-y^{2k+2}}{1-y^{2k+1}}(1-\frac{x^2(1-y)y^k}{(1+y^k)^2})\right]. \end{equation}
3. When $q_n^2/n\to \infty$,  the limiting density satisfies the semicircle law
$$\rho_{\infty}(x)=\frac{1}{2\pi}\sqrt{4-x^2} \mbox{\Large$\chi$}_{[-2,2]}.$$
\end{thm}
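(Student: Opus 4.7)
Since $\rho_\infty$ is moment-determinate in all three regimes (the candidate limits in Cases 2 and 3 have compact support, while the Gaussian in Case 1 satisfies Carleman's condition), it suffices to show that $\frac{1}{L_n}\Tr H^k$ converges almost surely to the $k$-th moment $m_k$ of $\rho_\infty$ for every fixed $k$. The plan is thus (i) to compute $\frac{1}{L_n}\mathbb{E}\Tr H^k\to m_k$ via a chord-diagram expansion, and (ii) to bound $\mathrm{Var}(\frac{1}{L_n}\Tr H^k)$ and combine Chebyshev with Borel-Cantelli to upgrade to an almost sure statement; moment-determinacy then gives the weak convergence $\rho_n\Rightarrow\rho_\infty$ almost surely.

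\textbf{Mean moments via chord diagrams.} Odd traces vanish by parity, so take $k=2p$ and expand
$$\frac{1}{L_n}\mathbb{E}\Tr H^{2p}=\frac{i^{pq_n}}{\binom{n}{q_n}^p}\sum_{I_1,\ldots,I_{2p}}\mathbb{E}(J_{I_1}\cdots J_{I_{2p}})\cdot\frac{1}{L_n}\Tr(\psi_{I_1}\cdots\psi_{I_{2p}}).$$
The i.i.d.\ structure and uniform moment bounds on $J$ force the leading-order contribution to come from pair partitions $\pi$ of $\{1,\ldots,2p\}$ with $I_a=I_b$ on each pair $(a,b)\in\pi$; higher-order coincidences cost at least a factor $1/\binom{n}{q_n}$ and are negligible. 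Using the Clifford relation \eqref{anticom} to bring paired $\psi_{I_a}$'s together gives $\psi_{I_a}^2=(-1)^{q_n(q_n-1)/2}$, which is absorbed by the factor $i^{q_n/2}$ in the definition of $H$, and a short commutation calculation shows that the remaining sign factorises over chord crossings as $(-1)^{\sum_{(a,b)\,\text{crosses}\,(c,d)}|I_a\cap I_c|}$. Averaging uniformly over $q_n$-subsets of $[n]$, the single-crossing expectation $\eta_n:=\mathbb{E}(-1)^{|I\cap J|}$ is (up to a standard hypergeometric identity) the coefficient of $x^{q_n}$ in $(1-x^2)^{q_n}(1+x)^{n-2q_n}\big/\binom{n}{q_n}$, whose binomial/saddle-point asymptotics give $\eta_n\to 1$, $e^{-2a}$, or $0$ in regimes 1, 2, 3 respectively.

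\textbf{Main obstacle and identification of $\rho_\infty$.} The crux is to prove that the joint expectation of the sign over the whole diagram factorises,
$$\frac{1}{\binom{n}{q_n}^{p}}\sum\nolimits_{\text{pair-coinciding }I}\prod_{\text{crossings in }\pi}(-1)^{|I_a\cap I_c|}\;\longrightarrow\;\eta_\infty^{\mathrm{cross}(\pi)},$$
because the overlaps $|I_a\cap I_c|$ attached to distinct crossings are not genuinely independent. I would justify asymptotic independence by showing that the joint distribution of these overlaps is a product of hypergeometrics up to $O(q_n^2/n^2)$ correlation errors that remain summable over the finitely many crossings at fixed $p$; special care is needed in regime 3, where $q_n\to\infty$ forces $\eta_n\to 0$ and one must check that the growth of $\mathrm{Pair}(2p)$ does not beat the exponential suppression of crossing diagrams. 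Granting the factorisation, the limiting moments are $m_{2p}=\sum_{\pi\in\mathrm{Pair}(2p)}\eta_\infty^{\mathrm{cross}(\pi)}$, which are precisely the $q$-Hermite moments of $f(x\,|\,q)$ in \eqref{fs} at $q=\eta_\infty$; specialising to $q=1$, $e^{-2a}$, $0$ recovers the Gaussian, the density $p_a$, and the semicircle law respectively.

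\textbf{Variance estimate.} Applying the same chord expansion to $\mathbb{E}(\frac{1}{L_n}\Tr H^k)^2$, pair partitions of $\{1,\ldots,2k\}$ that split the two traces reproduce $(\mathbb{E}\frac{1}{L_n}\Tr H^k)^2$ and cancel, while partitions that genuinely mix indices across the two traces carry a surplus factor of $L_n^{-2}=2^{-n}$ from the trace normalisation. This gives exponential-in-$n$ decay of $\mathrm{Var}(\frac{1}{L_n}\Tr H^k)$, which together with Chebyshev and Borel-Cantelli yields the almost sure convergence $\frac{1}{L_n}\Tr H^k\to m_k$ for every $k$ and completes the proof.
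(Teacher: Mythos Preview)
Your overall strategy---moment method via chord diagrams plus a variance bound and Borel--Cantelli---is exactly the paper's, and your mean-moment computation (the crossing sign formula, the factorisation into $\eta_\infty^{\mathrm{cross}(\pi)}$, and the identification with $q$-Hermite moments) lines up with the paper's Lemmas~3--5. One point where the paper works harder than you indicate is Case~3: your proposed control of the joint overlap law by ``$O(q_n^2/n^2)$ correlation errors'' is precisely what breaks down when $q_n^2/n\to\infty$. The paper instead conditions on all but one crossing pair and bounds the conditional sign $\mathbb{E}[(-1)^{|R_i\cap R_j|}\mid\mathcal{F}_{i,j}]$ through an explicit hypergeometric estimate $|F(p,q,m)|\le e^{-\min(p,m-p)\min(q,m-q)/(2m)}$ together with Chebyshev bounds on $|V_{i,j}|$; this is the content of Lemma~5 and is not a routine consequence of asymptotic independence.

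The genuine gap, however, is your variance argument. You assert that pair partitions which mix indices across the two trace factors ``carry a surplus factor of $L_n^{-2}=2^{-n}$ from the trace normalisation.'' This is false: for any configuration $(R_1,\ldots,R_k)$ one has $L_n^{-1}\Tr(\Psi_{R_1}\cdots\Psi_{R_k})\in\{-1,0,1\}$, so the $L_n^{-2}$ is already fully absorbed and yields no extra suppression whatsoever. Mixed pair-configurations with $\Psi_{R_1}\cdots\Psi_{R_k}=\pm I$ do occur (for instance with $m\ge 3$ cross-pairs satisfying $R_1\triangle\cdots\triangle R_m=\emptyset$), and each contributes $\pm 1$ to the normalised covariance. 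The reason the variance is small is purely \emph{combinatorial}: such configurations are rare. The paper's Lemma~6 makes this precise by observing that in $B_m=\{(R_1,\ldots,R_m):\Psi_{R_1}\cdots\Psi_{R_m}=\pm I,\ R_i\text{ distinct}\}$ the last coordinate is determined by the first $m-1$, so $|B_m|\le|I_n|^{m-1}$; summing over $m$ gives $\mathrm{Var}(L_n^{-1}\Tr H^k)\le c_k\binom{n}{q_n}^{-1}$. This is polynomial, not exponential, decay---still summable since $\binom{n}{q_n}\ge\binom{n}{2}$, so Borel--Cantelli survives---but the mechanism you invoke does not exist and your argument as written does not prove any variance bound.
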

As a remark, as $a\to +\infty$, we have the limit
$$\lim_{a\to +\infty}p_a(x)=p_{\infty}(x)=\frac{1}{2\pi}\sqrt{4-x^2}\mbox{\Large$\chi$}_{[-2,2]}.$$
  As $a\to0$, $p_0(x)$ is proved to be the standard Gaussian distribution (see section 2 in \cite{ISV}), i.e.,
$$\lim_{a\to 0}p_a(x)=p_0(x)=\frac1{\sqrt{2\pi}}e^{-\frac {x^2}2}.$$
The above two  limits indicate that case 2 in Theorem \ref{main1} yields a phase transition between case 1 and case 3.

We will prove Theorem \ref{main1} by the moment method: we will first prove that the expectation of $\mathbb \rho_n$ tends to $\rho_\infty$
 in \S\ref{expected}, the claim of the almost sure convergence follows the estimate of the variance  in \S\ref{almost}.

\begin{rem} There are several new features about our results compared with these in physics. 
Physicists only care the Gaussian SYK model, i.e., when all random variables  $J_{i_1\cdots  i_{q_n}}$ are i.i.d standard Gaussian. They have  proved cases 1 and 2 in  Theorem \ref{main1}   for the Gaussian SYK model (cf. \cite{black, GV, GV2}); but for case 3,
there is only a heuristic proof by some physics method making use of the Grassmann integral \cite{LNZ}. In our paper, we will derive the limit of the global density of eigenvalues for more general random variables  rigorously, especially for the case 3. Furthermore, we prove that the convergence  is with probability 1 (or almost surely).

\end{rem}
As another remark, a related model is the quantum $q$-spin glass model considered in \cite{So, KLW, KLW2}. For $q\geq 1$, the Hamiltonian of a quantum $q$-spin glass is
$$H^q_{  n}=3^{-q/2}{n\choose q}^{-1/2}\sum_{1\leq i_1<\cdots<i_{q}\leq n}\sum_{a_1,\cdots, a_q=1}^3\alpha_{a_1,\cdots, a_q,(i_1,\cdots, i_q)} \sigma_{i_1}^{a_1}\cdots \sigma_{i_q}^{a_q},$$
where the coefficients $\alpha_{a_1,\cdots, a_q,(i_1,\cdots, i_q)}$
are i.i.d. random variables with expectation $0$ and variance $1$ and
\begin{equation}\label{sigma}\sigma_i^a=I_2^{\otimes (i-1)}\otimes \sigma^a\otimes I_2^{\otimes (n-i)}\end{equation}
where $\sigma^a$, $a=1,2,3$ are three Pauli matrices and $I_2$ is the $2\times 2$ identity matrix. Thus $\sigma_i^a$ is a $2^n\times 2^n$ matrix.
Erd\H{o}s-Schr\"oder proved in \cite{So} that the limiting density of $H^q_{  n}$ has a  phase transition  in the regimes of $q^2\ll n, q^2/n\to a$ and $q^2\gg n$ and the similar results as Theorem \ref{main1} can be derived. 

\begin{rem} Our proof of cases 1 and 2 follows closely with that of Erd\H{o}s-Schr\"oder in \cite{So}, but  the proof of case 3 is quite different. This is because the matrices \eqref{sigma} are explicitly given and one can make use of this explicit construction  to derive the result. But in our case,  the essential difficulty is that we  can only apply the anticommutative relation \eqref{anticom}, so we have to take a totally different approach (see Lemma \ref{lemma5} below).
\end{rem}


Our proof can be applied to the case when $q_n$ is odd where \begin{equation}\label{sykm1}H=i^{(q_n-1)/2}\frac{1}{\sqrt{{{n} \choose {q_n}}}}\sum_{1\leq i_1<i_2< \cdots <i_{q_n} \leq n} J_{i_1i_2\cdots i_{q_n}}{\p}_{i_1}{\p}_{i_2}\cdots {\p}_{i_{q_n}}.\end{equation} Following the   proof of Theorem \ref{main1},  in \S \ref{theorem2proof}, we will prove
 \begin{thm}\label{main2}Let $n$ be even and $1\leq q_n\leq n/2$ be odd. The random variables $J_{i_1i_2\cdots i_{q_n}}$ satisfy the same assumptions as in Theorem \ref{main1}. Then   the normalized empirical measure  $\rho_n$  of the eigenvalues of   random matrices of \eqref{sykm1}  will converge to $\rho_\infty$ almost surely in the sense of distribution where \\
1. When $q_n^2/n\to 0$, then $\rho_\infty=\frac{1}{2}(\delta_{1}+\delta_{-1})$. \\
2. When $q_n^2/n\to a$, then  the density function of $\rho_\infty$ is $$p_a(x) =
 \begin{cases}
    f(x| -e^{-2a})   &\text{if $x\in [-\frac 2{\sqrt{1+e^{-2a}}},\frac 2{\sqrt{1+e^{-2a}}}]$,} \\
    0 &  \text{else,}
 \end{cases} $$
where the function  $ f(x|y)$ is given by \eqref{fs}.\\
3. When $q_n^2/n\to \infty$,  the limiting density $\rho_\infty$ is still the semicircle law. \end{thm}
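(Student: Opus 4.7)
The plan is to follow the moment method used for Theorem \ref{main1}, modifying only the steps affected by the fact that each Majorana string now has odd length. The starting point is the expansion
\[
\frac{1}{L_n}\e\Tr(H^{2k})=\frac{c_{k,q_n}}{{{n} \choose {q_n}}^k L_n}\sum_{I_1,\dots,I_{2k}}\e[J_{I_1}\cdots J_{I_{2k}}]\,\Tr(\p_{I_1}\cdots\p_{I_{2k}}),
\]
where $\p_{I_j}=\p_{i_1^{(j)}}\cdots\p_{i_{q_n}^{(j)}}$ and $c_{k,q_n}$ absorbs the $2k$-th power of $i^{(q_n-1)/2}$. The moment assumptions on the $J$'s force the dominant contribution to come from perfect pair partitions of $\{1,\dots,2k\}$, and for each such pairing one must evaluate a trace of a product of Majorana strings by repeated application of \eqref{anticom}.

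The only structural change from the even case is the sign: moving two disjoint Majorana strings of odd length $q_n$ past each other costs $(-1)^{q_n^2}=-1$, whereas in the even case it costs $+1$. Propagating this single sign flip through the combinatorial reduction of \S\ref{expected} converts the generating function in the variable $e^{-2a}$ into the same generating function in $-e^{-2a}$; this yields the density $f(x\mid -e^{-2a})$ and support $[-2/\sqrt{1+e^{-2a}},\,2/\sqrt{1+e^{-2a}}]$ in case~2. In case~3 ($q_n^2/n\to\infty$), the analogue of Lemma \ref{lemma5} isolates noncrossing pair partitions as the sole surviving contribution; the collection of signs collapses, and the Catalan moments give back the semicircle law. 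Case~1 ($q_n^2/n\to 0$) is the most distinctive: a direct computation shows that every even moment tends to $1$ while every odd moment vanishes by symmetry of the $J$'s, which forces $\rho_\infty=\frac12(\delta_1+\delta_{-1})$.

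Almost sure convergence is then upgraded from convergence in expectation exactly as in \S\ref{almost}: one bounds the variance of $\int \la^p\,d\rho_n(\la)$ by the same type of combinatorial estimate, and the Borel--Cantelli lemma together with a standard density argument yields convergence in distribution almost surely. The main obstacle is purely combinatorial: the sign $(-1)^{q_n^2}$ must be tracked consistently across all pair partitions and all overlap patterns of the index sets, and in case~3 one must verify that the analogue of Lemma \ref{lemma5}, which deliberately avoids any explicit realization of the $\p_i$'s, continues to apply with the new sign. Once this bookkeeping is handled uniformly, the three phases of Theorem \ref{main2} fall out in parallel with those of Theorem \ref{main1}.
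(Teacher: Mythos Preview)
Your outline is correct and follows essentially the same route as the paper: the odd-$q_n$ analogue of Lemma \ref{lemma3} picks up one extra factor of $-1$ per crossing (coming from $(-1)^{q_n^2}=-1$), which turns $e^{-2a}$ into $-e^{-2a}$, and Lemmas \ref{lemma1}, \ref{lemma5}, \ref{lemma6} carry over verbatim to give cases~2, 3, and the almost sure upgrade. The only notable difference is in case~1, where the paper avoids proving that \emph{all} even moments equal $1$ and instead checks only the second and fourth moments, then uses $\langle(x^2-1)^2,\rho_\infty\rangle=0$ to force $\rho_\infty=\tfrac12(\delta_1+\delta_{-1})$; the paper also treats $q_n=1$ separately by the elementary observation that $H^2$ is a scalar multiple of the identity.
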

Moreover, following  \cite{ISV}, one can also prove that there is a phase transition for this odd case. 


 Given $H$ with $q_n$ product of Majorana fermions,   $\psi_1\cdots\psi_n H$ will
yield another Hamiltonian with $n-q_n$ product of fermions, i.e.,
$$\psi_1\cdots\psi_n H=i^{q_n/2}\frac{1}{\sqrt{{{n} \choose {q_n}}}}\sum_{ } \pm J_{i_1i_2\cdots i_{q_n}}\p_1\cdots \hat{\p}_{i_1}\cdots \hat {\p}_{i_{q_n}}\cdots \psi_n,$$
where $\hat {\p} $ denotes the omitting of the fermion ${\p} $.
This implies that  there is a symmetry between the systems of $q_n$ and $n-q_n$. Therefore, Theorems \ref{main1} and \ref{main2} can be extended to $q_n\geq  n/ 2$ immediately.
For example,  given $H$ with even $q_n\geq  n/ 2$,  let $\widetilde{H}=i^{n/2-q_n}\psi_1\cdots\psi_n H$, then $\widetilde{H} $ corresponds to the case $n-q_n\leq n/2$. We first have
 $H^2=\widetilde{H}^2$ and thus $ L_n^{-1}\Tr H^k=L_n^{-1}\Tr \widetilde{H}^k$ for $k$ even.
Moreover,  Lemma \ref{lemma1} also implies $ \e [L_n^{-1}\Tr {H}^k]\to 0$ for $k$ odd. Therefore, the moment method implies that Theorem \ref{main1} holds for $H$ with $q_n^2/n\to a\in[0,+\infty] $ replaced by $(n-q_n)^2/n\to a\in[0,+\infty]$.

\subsection{Largest eigenvalues}One of the central studies in random matrix theory is about the largest eigenvalue of random matrices. There are many classical results regarding the largest eigenvalue for   general Wigner matrices, such as Bai-Yin's result on the almost sure convergence of the largest eigenvalue and the Tracy-Widom law about the rescaling of the largest eigenvalue (see \cite{AGZ, T}). 

 	In \S \ref{largg}, we will prove two easy results regarding  the largest eigenvalue of the SYK model.  When $q_n=2$, the system is totally solvable and it seems that physicists know how to derive the largest eigenvalue (cf. page. 4 in \cite{MS}). Actually,  if  $q_n=2$, we may consider $J_{ij}$ as a random antisymmetric matrix, i.e., $B:=(J_{ij})_{1\leq i,j\leq n},\ J_{ji}=-J_{ij} $ is a real random antisymmetric matrix. We assume the eigenvalues of $B$ are $\pm i\mu_j$ where  $\mu_j\geq 0$ for $ 1\leq j\leq n/2$. Then it's well-known that all eigenvalues of the SYK model can be expressed in term of $u_j,\,j=1,\cdots, n/2$ by identity \eqref{expressions}, we also refer to \cite{black, GV3}. Actually, the largest eigenvalue is
 $$\lambda_{max}= {{n} \choose {2}}^{-\frac{1}{2}}\sum\limits_{j=1}^{n/2} \mu_{j}.$$
 Then by the classical results on the distribution of eigenvalues of random antisymmetric matrices, we  easily have \begin{thm}\label{2case}For $q_n=2$, the largest eigenvalue of $H$ satisfies
 $$\lim_{n\to+\infty}\frac{\lambda_{max}}{\sqrt{n}}=\frac{4\sqrt{2}}{3\pi}, \ a.s.$$
 \end{thm}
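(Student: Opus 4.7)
The plan is to reduce Theorem \ref{2case} to classical facts about the spectrum of a real antisymmetric random matrix. Starting from the identity
$$\lambda_{max}=\binom{n}{2}^{-1/2}\sum_{j=1}^{n/2}\mu_j$$
supplied in the paragraph preceding the theorem, we rewrite
$$\frac{\lambda_{max}}{\sqrt n}=\sqrt{\frac{2n}{n-1}}\cdot\frac{1}{n^{3/2}}\sum_{j=1}^{n/2}\mu_j,$$
so since $\sqrt{2n/(n-1)}\to\sqrt 2$, the claim reduces to showing
$$\frac{1}{n^{3/2}}\sum_{j=1}^{n/2}\mu_j\longrightarrow\frac{4}{3\pi}\quad\text{almost surely.}$$

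The key observation is that $iB=(iJ_{ij})_{1\le i,j\le n}$ is Hermitian with i.i.d.\ off-diagonal entries of mean $0$, variance $1$, and uniformly bounded moments of every order (inherited from the hypothesis of Theorem \ref{main1}). Its $n$ eigenvalues are exactly $\{\pm\mu_j\}_{j=1}^{n/2}$. The classical Wigner semicircle law in its almost sure form then implies that
$$\mu_n:=\frac{1}{n}\sum_{j=1}^{n/2}\bigl(\delta_{\mu_j/\sqrt n}+\delta_{-\mu_j/\sqrt n}\bigr)$$
converges weakly a.s.\ to the semicircle density $\frac{1}{2\pi}\sqrt{4-x^2}$ on $[-2,2]$, while Bai-Yin's theorem gives $\max_j\mu_j/\sqrt n\to 2$ almost surely.

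Combining these two ingredients allows us to evaluate the unbounded linear statistic $\int|x|\,d\mu_n(x)=2n^{-3/2}\sum_{j=1}^{n/2}\mu_j$. On an event of full probability, $\mu_n$ is eventually supported in a fixed compact set such as $[-3,3]$, so testing against any continuous, compactly supported function that equals $|x|$ on $[-3,3]$ gives
$$\frac{2}{n^{3/2}}\sum_{j=1}^{n/2}\mu_j\longrightarrow\int_{-2}^{2}\frac{|x|}{2\pi}\sqrt{4-x^2}\,dx=\frac{1}{\pi}\int_{0}^{2}x\sqrt{4-x^2}\,dx=\frac{8}{3\pi}.$$
Dividing by $2$ and multiplying by the prefactor $\sqrt{2n/(n-1)}\to\sqrt 2$ yields the claimed limit $4\sqrt 2/(3\pi)$.

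The only genuine subtlety is passing from weak convergence of $\mu_n$ to convergence of $\int|x|\,d\mu_n$, since $x\mapsto|x|$ is unbounded on $\R$; this is precisely what the Bai-Yin edge bound is designed to handle, and the all-moments assumption on $J_{ij}$ supplies its hypotheses. No other obstacle is anticipated: everything else is a direct computation or a quotation of standard Wigner theory, which explains why the theorem is labelled ``easy'' in the introduction.
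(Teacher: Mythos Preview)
Your argument is correct and follows the same overall route as the paper: both start from the identity $\lambda_{max}=\binom{n}{2}^{-1/2}\sum_{j}\mu_j$, invoke the semicircle law for the rescaled eigenvalues of the antisymmetric matrix $B$, and evaluate $\int_{-2}^{2}\frac{|x|}{2\pi}\sqrt{4-x^2}\,dx=\frac{8}{3\pi}$.

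The one genuine difference is in how the passage from weak convergence of the empirical measure to convergence of the unbounded statistic $\int|x|\,d\mu_n$ is justified. You invoke Bai--Yin to conclude that the support of $\mu_n$ is eventually contained in a fixed compact interval, which lets you replace $|x|$ by a compactly supported continuous test function. The paper instead observes that $\langle x^2,\rho_n^*\rangle=\frac{2}{n(n-1)}\sum_{i<j}J_{ij}^2$, whose expectation equals $1$ (and which is a.s.\ bounded by the strong law of large numbers), and uses this second-moment control as a uniform integrability criterion. The paper's route is more elementary in that it needs only the semicircle law and a moment bound, not any edge result; your route is perhaps cleaner conceptually but imports a stronger theorem (Bai--Yin) than is strictly necessary, and one should check that its hypotheses cover the purely imaginary Wigner ensemble $iB$.
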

For general $q_n$,  we do not know how to derive the asymptotic limit of the largest eigenvalue of the SYK model, although there are many numerical results \cite{GV, GV2}. If we consider the Gaussian SYK model, we only have the  rough upper bound, 
\begin{thm}\label{gausscase}
For the Gaussian SYK model where $J_{i_1\cdots i_{q_n}}$ are i.i.d. standard Gaussian random variables, let $q_n\geq 4$ be even,
then we have  \begin{equation}\e \la_{max}\leq \sqrt{n\ln 2}.\end{equation}
\end{thm}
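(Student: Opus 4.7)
The plan is to combine the Clifford algebra structure of $H$ with a matrix Chernoff bound for Gaussian series of Hermitian matrices, via Lieb's concavity theorem.

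First, rewrite $H=\binom{n}{q_n}^{-1/2}\sum_I J_I M_I$, where $I=(i_1<\cdots<i_{q_n})$ runs over the $\binom{n}{q_n}$ multi-indices and $M_I:=i^{q_n/2}\psi_{i_1}\cdots\psi_{i_{q_n}}$. Using \eqref{anticom}, a short direct computation shows that for even $q_n$ each $M_I$ is Hermitian and $M_I^2=I_{L_n}$ (the identity): reversing the order of the $\psi$'s contributes a sign $(-1)^{q_n(q_n-1)/2}$, which combines with $(i^{q_n/2})^2=(-1)^{q_n/2}$ to yield $(-1)^{q_n^2/2}=1$. In particular $\|M_I\|=1$, and $\sum_I\bigl(\binom{n}{q_n}^{-1/2}M_I\bigr)^2=I_{L_n}$.

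Second, apply the exponential Markov trick: for any $\theta>0$, $e^{\theta\lambda_{\max}}\le\Tr e^{\theta H}$, hence $\theta\,\e\lambda_{\max}\le\log\e\Tr e^{\theta H}$ by Jensen's inequality. The key ingredient is Lieb's concavity theorem, which yields the Gaussian matrix moment-generating-function estimate $\e_J\Tr\exp(A+JB)\le\Tr\exp(A+B^2/2)$ valid for any Hermitian $A,B$ and $J\sim N(0,1)$. Iterating this bound over the i.i.d.\ coefficients $J_I$ gives
$$\e\Tr e^{\theta H}\le\Tr\exp\Bigl(\tfrac{\theta^2}{2\binom{n}{q_n}}\sum_I M_I^2\Bigr)=\Tr\exp\bigl(\tfrac{\theta^2}{2}I_{L_n}\bigr)=L_n\,e^{\theta^2/2}.$$

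Third, optimize: one obtains $\e\lambda_{\max}\le(\log L_n)/\theta+\theta/2$, minimized at $\theta=\sqrt{2\log L_n}=\sqrt{n\ln 2}$ (since $L_n=2^{n/2}$), yielding the advertised bound $\e\lambda_{\max}\le\sqrt{n\ln 2}$. The only real obstacle is invoking the correct form of Lieb's concavity theorem (equivalently, the single-variable Gaussian matrix MGF estimate used throughout Tropp's matrix-concentration framework); the Clifford identities $M_I^*=M_I$ and $M_I^2=I_{L_n}$ are a two-line computation from \eqref{anticom}, the reduction from the joint law of $\{J_I\}$ to an iterated one-variable bound is a routine use of the tower property, and the final scalar optimization is elementary. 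Note the argument uses only that $q_n$ is even and $\ge 2$ (the hypothesis $q_n\ge 4$ is irrelevant to this step).
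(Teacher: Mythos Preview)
Your argument is correct and reaches the same endpoint as the paper---namely $\e\Tr e^{\theta H}\le L_n e^{\theta^2/2}$, followed by Jensen and the scalar optimization at $\theta=\sqrt{n\ln 2}$---but the route to that moment-generating-function bound is genuinely different. The paper expands $\e\Tr e^{\theta H}$ as a power series, observes that for standard Gaussians every mixed moment $\e[J_{R_1}\cdots J_{R_k}]$ is nonnegative (Isserlis/Wick), and combines this with the Clifford bound $|L_n^{-1}\Tr(\Psi_{R_1}\cdots\Psi_{R_k})|\le 1$ to dominate the $k$-th term by the $k$-th moment of a single scalar standard Gaussian; summing gives $L_n e^{\theta^2/2}$ directly. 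Your approach instead invokes the Lieb/Tropp matrix MGF lemma $\e_J\Tr\exp(A+JB)\le\Tr\exp(A+\tfrac12 B^2)$ and iterates over the coefficients, using $M_I^2=I_{L_n}$ to collapse the variance proxy to the identity. The paper's computation is more elementary in that it avoids Lieb's theorem entirely, trading it for the (equally Gaussian-specific) nonnegativity of Wick moments; your route is the off-the-shelf matrix-concentration argument, which has the advantage of extending verbatim to, say, sub-Gaussian coefficients at the cost of a constant. Your remark that only $q_n\ge 2$ even is used is also correct for the paper's proof.
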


 
 \subsection{Further discussions}
 There are still many  problems and essential difficulties about the SYK model. The moment method is an effective way to study the global properties of the SYK model, but it will fail  when studying the local behaviors. For example,  one may  use the moment method to prove the almost sure convergence of the largest eigenvalue for general Wigner matrices \cite{AGZ}. But for the SYK model, the moment method does not work, this is  basically because the SYK model is a sparse matrix,  where the size of the matrix is of exponential growth but the number of nonzero elements is of polynomial growth. Therefore,  it seems impossible to estimate the trace $\Tr H^{k_n}$ where the power  $k_n$ is some function  growing with $n$ as in the classical moment method (see Sinai-Soshnikov's proof  \cite{Ss} and Bai-Yin's proof in \cite{T}). 

Another difficulty about the SYK model  is that the resolvent method which is found to be one of the most powerful methods in random matrix theory fails. Briefly, the idea of the resolvent method is that the  Stieltjes transforms of $N\times N$ Wigner matrices can be expressed in term of some 
$(N-1)\times (N-1)$ matrices by the Cramer's rule, and one may estimate these matrices to derive recursively an equation about the  Stieltjes transforms  as $N\to\infty$ (see \cite{AGZ, T}).  But so far, we have not found the right way to apply the resolvent method to the SYK model. 

To conclude, let's discuss some further problems for the SYK model. 

1. Regarding the largest eigenvalue,  let's assume $q_n$ is even, when  $q_n^2/n\to 0$, does  $\lim_{n\to \infty} \e \lambda_{\max}/\sqrt n$ exist? If it exists, how does it depend on $q_n$?  When $q_n^2/n\to a\in (0,\infty]$,  does $ \lambda_{\max}$ converge to  $\frac 2{\sqrt{1-e^{-2a}}}$ almost surely? 
 
Let's discuss the first question. Let's assume that the random variables are i.i.d. standard Gaussian. For every fixed $q_n\geq 4,\ q_n$ even, we know that $\e \la_{max}/\sqrt{n} $ is bounded from above by Theorem \ref{gausscase}, but it seems very difficult to prove its convergence. 	
For any $ \beta>0$, let's define the partition function
\begin{equation}\label{parti}Z(\beta)=\Tr e^{-\beta H}. \end{equation} Then we easily have  $$\la_{max}\leq (\sqrt{n}\beta)^{-1}\ln Z(-\sqrt{n}\beta)\leq \la_{max}+(2\beta)^{-1} \sqrt{n}\ln 2.$$  Therefore, one possible approach to prove the convergence of $\e \la_{max}/\sqrt{n} $ is to prove the convergence of $n^{-1}\e\ln Z(-\sqrt{n }\beta),$ then let $\beta\to +\infty.$ It seems that one may apply the  idea of the Parisi formula for the   Sherrington-Kirkpatrick (SK) model to prove this \cite{ta2}.
But compared with the SK model, the main difference of the SYK model is that the product of Majorana fermions may not commute with each other, this causes many difficulties.

 2. 
There are some results in physics regarding the rescaling of the distribution around the ground state of the SYK model, especially for the case $q_n=4$ \cite{black, GV, GV2, SY2}. To consist with these results, let's consider the original SYK model in \eqref{sykm2} for $q_n=4$, $$H_{SYK}=\frac{1}{\sqrt{\frac {n^{3}} {6J^2}}}\sum_{1\leq i_1<i_2< i_3 <i_4 \leq n} J_{i_1i_2i_3 i_{4}}{\p}_{i_1}{\p}_{i_2}{\p}_{i_3}{\p}_{ i_{4}},$$
where $J_{i_1i_2i_3 i_{4}}$ are i.i.d. Gaussian distribution.  Let $$\tilde\rho_n(\lambda):=\sum_{i} \delta_{\la_{i}}(\la)$$ be the empirical measure of eigenvalues. One of the main results in physics (see equation B. 15 in \cite{black} and also the numerical results in \cite{GV2})  is the following rescaling limit for $\lambda- a_0 \sim  1/n$ as $n$ large enough,\begin{equation}
	\label{timm}
 \e \tilde \rho_n(\lambda)\propto e^{nS_0}\sinh \sqrt {nC(\lambda-a_0)} \end{equation}
where $a_0=\alpha n$ for some $\alpha<0$ and the constants $S_0$ and $C$ are independent of $n$.
derived in \cite{MS} by the physics method of the path integral, there are also some numerical results in \cite{GV}.  The constant $S_0$ is universal, but the constant $C$ is non-universal, depending on the distribution of the $J_{i_1i_2i_3i_4}$, and is not known exactly.

 The method in physics to derive such rescaling limit is by the double-scaled limit (see Appendix B in \cite{black}), where the idea   is based on the assumption that the ground state follows the same phase transition as the global density in Theorem \ref{main1}.  Such assumption seems quite reasonable, but we do not have any rigorous proof, and hence the proof of \eqref{timm}.

3. In this paper, we can derive the limiting density $\rho_\infty$ of the global density, but it seems very reasonable to believe that the global density has a full expansion with lower order terms. This problem is considered in \cite{GV2, GV3, GV4} and the first two lower order terms are computed numerically,  thus it's also very meaning to derive the lower order terms mathematically.

\bigskip
\textbf{Acknowledgement:} We thank Subir Sachdev for bringing our attention to the SYK model and sending us a short note on the SYK model, which is
very helpful for us to
start this project. We also thank Gerard Ben Arous, Zhi-Dong Bai, Peter J. Forrester, Dang-Zheng Liu, Douglas Stanford for many helpful discussions.

\section{Expected density}\label{expected}
In this section, we will  prove that when $2\leq q_n\leq n/2$ is even, the expectation of the normalized empirical measure $\mathbb \rho_n$ will tend to $\rho_\infty$ as in Theorem \ref{main1} in the sense of distribution.
The proof is based on the moment method. 

\subsection{Notations}\label{notations}Let's introduce some notations.
  For a set $A=\{i_1, i_2,\cdots, i_{m}\} \subseteq\{1,2,\cdots,n\}$, $ 1\leq i_1<i_2< \cdots <i_m \leq n,$ we denote $$\Psi_A:=\psi_{i_1}\cdots \psi_{i_{m}} \,\,\,\mbox{and}\,\,\,\Psi_A:=I\,\,\mbox{if}\,\,A=\emptyset.$$ We denote the set   $$I_n=\{(i_1, i_2,\cdots, i_{q_n}),\,\,1\leq i_1<i_2< \cdots <i_{q_n} \leq n \}.$$Thus the cardinality of $I_n$ is $$|I_n|={n\choose q_n}.$$ For any coordinate $R=(i_1, \cdots, i_{q_n})\in I_n$, we denote $$J_R:=J_{i_1\cdots i_{q_n}}\,\,\,\mbox{and}\,\,\,\,\Psi_R:=\psi_{i_1}\cdots \psi_{i_{q_n}}. $$ Sometimes we identify $R$ with the set $\{i_1, \cdots, i_{q_n}\}.$

Given any set $X$ and any integer $k\geq 1$, we define $P_2(X^k)$
 to be   the tuples $(x_1,\cdots, x_k) \in X^k$
for
which all entries $x_1,\cdots, x_k$ appear exactly twice. If $k$ is odd, then $P_2(X^k)$ is an empty set.

Throughout the article, we denote $c_k$
as some constant  depending only on $k$ and independent of $n$ and $q_n$, but its value may differ from line to line, the same  for $c_{2k}$, $c_k'$  and so forth.

\subsection{Some basic properties on fermions}\label{basicfermion}
Let's  derive some basic properties about the product of Majorana fermions that  will be applied many times in the article.
By relation \eqref{anticom}, we easily have\begin{itemize}

\item If $|A|$ is odd, $i\in A$ or $|A|$ is even, $i\not\in A$, then $\Psi_A\psi_i=\psi_i\Psi_A $.

\item If $|A|$ is even, $i\in A$ or $|A|$ is odd, $i\not\in A$, then $\Psi_A\psi_i=-\psi_i\Psi_A $.

\item $\Psi_A\Psi_B=\pm\Psi_{A\triangle B}, $ here we denote $A\triangle B:=(A\setminus B)\cup (B\setminus A)$.

\end{itemize}
Using these properties, we know that
given a set $A\subseteq\{1,2,\cdots,n\}$, if $|A|$ is odd, then $|A|<n,$ taking $i\in\{1,2,\cdots,n\}\setminus A$, then we have $\Tr\Psi_A=\Tr\psi_i\Psi_A\psi_i^{-1}=\Tr(-\Psi_A); $ if $|A|$ is even and $A\neq \emptyset,$ taking $i\in A$, then we still have $\Tr\Psi_A=\Tr\psi_i\Psi_A\psi_i^{-1}=\Tr(-\Psi_A).$ Therefore,  we show that
\begin{itemize}

\item$\Tr\Psi_A=0 $ and $\Psi_A\neq \pm I $ are always true for $A\neq \emptyset.$\end{itemize} For $ {R_1},\cdots ,{R_k}\subseteq\{1,2,\cdots,n\}$, we have $\Psi_{R_1}\cdots \Psi_{R_k}=\pm\Psi_{R} $ for some $R\subseteq\{1,2,\cdots,n\}$. Thus, if $\Psi_{R_1}\cdots \Psi_{R_k}\neq\pm I$, then $R\neq \emptyset$ and $\Tr \Psi_{R_1}\cdots \Psi_{R_k}= 0 ;$ if $\Psi_{R_1}\cdots \Psi_{R_k}=\pm I$, then $\Tr \Psi_{R_1}\cdots \Psi_{R_k}= \pm L_n .$ In both cases, we always have \begin{equation}\label{tr}
|\frac 1 {L_n} \Tr \Psi_{R_1}\cdots \Psi_{R_k}|\leq 1.
 \end{equation} For $A,B\subseteq\{1,2,\cdots,n\}$, if $\Psi_A=\pm\Psi_B,$ then $\Psi_{A\triangle B}=\pm\Psi_A\Psi_B=\pm\Psi_A\Psi_A=\pm\Psi_{A\triangle A}=\pm\Psi_{\emptyset}=\pm I, $ then we must have $A\triangle B=\emptyset$ and $A=B$, i.e.,
  \begin{itemize}

\item $\Psi_A=\pm\Psi_B$ if and only if $A=B$.
\end{itemize}

\subsection{Expectation}
Now we can turn to prove that the expectation of the normalized empirical measure $\rho_n$ tends to $\rho_\infty $ as in Theorem \ref{main1} when $2\leq q_n\leq n/2$ is even. We will need several lemmas but  their  proofs are postponed to \S\ref{lemmas}.

We first rewrite \begin{equation}\label{mnk}m_{n,k}   = \frac 1{L_n}   \e[ \Tr H^k]= \frac 1{L_n} \frac {i^{q_nk/2}}{{n\choose q_n}^{k/2}} \sum_{{R_1},...,  {R_k}\in I_n} \e[J_{R_1}\cdots J_{R_k}]  \Tr \Psi_{R_1}\cdots \Psi_{R_k}. \end{equation}

As in \cite{So}, we divide the above summation into several  parts and bound each part separately. We first divide the summation of \eqref{mnk} as
\begin{equation}\label{deco}m_{n,k}=\sum_{({R_1},...,  {R_k})\in  P_2(I_n^k)}+\sum_{({R_1},...,  {R_k})\in I^k_n\setminus P_2(I_n^k)}.  \end{equation}
We first have
\begin{equation} \label{exp}  \e[J_{R_1}\cdots J_{R_k}] =1 \,\,\, \mbox{for}\,\,\, ({R_1},...,  {R_k})\in  P_2(I_n^k).
\end{equation}
The following lemma shows that the second summation for $({R_1},...,  {R_k})\in I^k_n\setminus P_2(I_n^k)$ tends to vanish as $ n\to\infty$.

\begin{lem} \label{lemma1}For the summation $({R_1},...,  {R_k})$ that do not appear exactly twice, we have the uniform estimates
 $$\frac 1{{n\choose q_n}^{k/2}}  \sum_{({R_1},...,  {R_k})\in I^k_n\setminus P_2(I_n^k)}| \e[J_{R_1}\cdots J_{R_k}]| \leq  \begin{cases}
   0  &\text{if $k=1$ or $2$,} \\
    c_k{n\choose q_n}^{-1/2}  &  \text{if $k\geq 3$ is odd,}\\
    c'_k{n\choose q_n}^{-1 }  &  \text{if $k\geq 4$ is even.}
 \end{cases}$$
 \end{lem}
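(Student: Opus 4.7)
The plan is to exploit three ingredients: the mean-zero / independence property of the $J_R$, the uniform boundedness of their moments, and a counting argument controlling the number of distinct coordinates that can appear.

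First I would observe that by independence, $\e[J_{R_1}\cdots J_{R_k}]$ factors as a product of $\e[J_R^{a_R}]$ over the distinct coordinates $R$ appearing in the tuple, where $a_R$ is the multiplicity. Since $\e J_R = 0$, any tuple in which some $R$ appears with multiplicity exactly $1$ contributes $0$. Hence the only tuples in $I_n^k \setminus P_2(I_n^k)$ which contribute to the sum are those in which every distinct coordinate appears with multiplicity $a_R \geq 2$ \emph{and} at least one of them has $a_R \geq 3$ (otherwise we would be in $P_2(I_n^k)$). The cases $k=1$ and $k=2$ are immediate: for $k=1$ every term vanishes, and for $k=2$ the only nonvanishing contribution comes from $R_1=R_2$, which is already in $P_2(I_n^2)$.

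For the remaining cases I would bound the summands and count the surviving tuples. By the uniformly bounded moment assumption on $|J_R|$, each factor $|\e[J_R^{a_R}]| \leq c_k$ since $a_R \leq k$, so $|\e[J_{R_1}\cdots J_{R_k}]| \leq c_k$ uniformly on such tuples. Let $m$ denote the number of distinct coordinates appearing. Then $k = \sum_R a_R \geq 2m$, so $m \leq \lfloor k/2 \rfloor$. If $k$ is odd then automatically $m \leq (k-1)/2$, while if $k$ is even and the tuple avoids $P_2(I_n^k)$ then some $a_R \geq 3$ forces $m \leq k/2 - 1$. The number of surviving tuples with exactly $m$ distinct coordinates is at most $\binom{|I_n|}{m}$ times the number of ordered sequences of length $k$ in an $m$-letter alphabet in which every letter is used at least twice, and the latter is a combinatorial constant depending only on $k$. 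Hence the total count is at most $c_k |I_n|^m$.

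Combining the two estimates, the sum on the left-hand side is bounded by $c_k |I_n|^m$, and after dividing by $|I_n|^{k/2} = \binom{n}{q_n}^{k/2}$ this yields $c_k |I_n|^{m - k/2}$. Plugging in $m \leq (k-1)/2$ for $k$ odd gives the bound $c_k \binom{n}{q_n}^{-1/2}$, and $m \leq k/2 - 1$ for $k$ even gives $c_k' \binom{n}{q_n}^{-1}$, as claimed.

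No step of this argument is genuinely hard; the only place where care is needed is in making sure the combinatorial constants are uniform in $n$ and $q_n$ (not just in $k$), which is clear because the count $c_k |I_n|^m$ already extracts the only $n$-dependent factor, and in correctly separating the odd-$k$ case (where the stronger bound on $m$ is free) from the even-$k$ case (where one must actively use that the tuple lies outside $P_2(I_n^k)$ to gain the extra factor of $|I_n|^{-1/2}$).
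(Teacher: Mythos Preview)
Your proposal is correct and follows essentially the same approach as the paper: eliminate tuples with a singleton via $\e J_R=0$, bound the surviving expectations by the uniform moment assumption, and then count tuples by the number $m$ of distinct coordinates, using $m\le (k-1)/2$ for odd $k$ and $m\le k/2-1$ for even $k$ to extract the factor $|I_n|^{-1/2}$ or $|I_n|^{-1}$. The only cosmetic difference is that the paper bounds the count directly by embedding into subsets $B\subseteq I_n$ of the maximal allowed size rather than summing over $m$, but this is the same argument.
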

We note that if $1\leq q_n\leq n-1$, then   ${n\choose q_n}^{-1/2}\to 0$. If $k$ is odd, then $P_2(I_n^k)=\emptyset$ by definition, and thus  the second inequality in Lemma \ref{lemma1} together with \eqref{tr}   imply that
\begin{equation}
  \label{odd}
  \begin{aligned}
  |m_{n,k}|&\leq  \frac 1{{n\choose q_n}^{k/2}} \sum_{{R_1},...,  {R_k}\in I_n} |\e[J_{R_1}\cdots J_{R_k}]| \frac 1{L_n}  |\Tr \Psi_{R_1}\cdots \Psi_{R_k}|\\
 &\leq c_k{n\choose q_n}^{-1/2} \to 0,
  \end{aligned}
\end{equation}
i.e.,  all the odd moments tend to 0. All of the rest is to estimate the moment $m_{n,k}$ when $k$ is even.


\subsubsection{Proof of case 1}
Now we are ready to prove case 1 in Theorem \ref{main1}.  For $k$ even,
similarly, Lemma \ref{lemma1} also implies that the second summation in \eqref{deco}  satisfies
$$| \sum_{({R_1},...,  {R_k})\in I^k_n\setminus P_2(I_n^k)}|\to 0.$$
To deal with the first summation, we further define the set
 $$A_{n,k}=\{({R_1},...,  {R_k})\in P_2( I_n^k) | R_i\cap R_j=\emptyset\,\,\mbox{if}\,\, R_i\neq   R_j \},\,\,\,B_{n,k}=P_2( I_n^k)\setminus A_{n,k}. $$ Then the first summation  can be further rewritten as
\begin{equation}\label{dec} \sum_{({R_1},...,  {R_k})\in  P_2(I_n^k)}=\sum_{A_{n,k}}+\sum_{B_{n,k}}.\end{equation}
We make such decomposition because of the following identity
\begin{equation}\label{deddc}  i^{q_nk/2}\Psi_{R_1}\cdots \Psi_{R_k}=I \,\,\,\mbox{for}\,\,\,({R_1},...,  {R_k})\in  A_{n,k}.\end{equation}
 \begin{lem}\label{lemma2}If  $q_n\ll\sqrt n$, for $k$ even,  we have the bounds
 $$     {n\choose q_n}^{-k/2}|B_{n,k}| \leq c_k (k-1)!!\frac{q_n^2}{n}$$
and   $$(k-1)!!\left(1-  c_k\frac{q_n^2}{n} \right)\leq  {n\choose q_n}^{-k/2}|A_{n,k}|\leq (k-1)!!,$$
where $c_k= \frac {k^2  }{4 }$.  
\end{lem}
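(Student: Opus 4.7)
The plan is to count $|A_{n,k}|$ exactly and obtain the bound on $|B_{n,k}|$ by subtracting from the (easy) upper bound on $|P_2(I_n^k)|$. Write $k=2m$. An element of $P_2(I_n^k)$ is uniquely specified by a perfect matching of the positions $\{1,\ldots,k\}$ into $m$ pairs together with an assignment of $m$ distinct elements of $I_n$ to those pairs; since there are $(k-1)!!$ perfect matchings and at most $\binom{n}{q_n}^m$ ordered tuples of distinct elements of $I_n$, one has $|P_2(I_n^k)|\le (k-1)!!\binom{n}{q_n}^m$. For $A_{n,k}$ I would impose the extra requirement that the $m$ chosen elements of $I_n$ be pairwise disjoint as subsets of $\{1,\ldots,n\}$; counting ordered $m$-tuples of pairwise disjoint $q_n$-subsets gives the exact formula
\[ |A_{n,k}| = (k-1)!!\,\prod_{j=0}^{m-1}\binom{n-jq_n}{q_n}. \]
The upper bound $|A_{n,k}|\le (k-1)!!\binom{n}{q_n}^{k/2}$ then follows at once from $\binom{n-jq_n}{q_n}\le\binom{n}{q_n}$.

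For the lower bound on $|A_{n,k}|$ the plan is to expand
\[ \frac{\binom{n-jq_n}{q_n}}{\binom{n}{q_n}} \;=\; \prod_{i=0}^{q_n-1}\left(1-\frac{jq_n}{n-i}\right), \]
use $n-i\ge n/2$ (valid since $q_n\le n/2$) to replace each factor by the smaller quantity $1-2jq_n/n$, and then apply the elementary inequality $\prod(1-x_i)\ge 1-\sum x_i$ (legitimate because $q_n\ll\sqrt n$ forces every factor to be non-negative once $n$ is large) to get $\binom{n-jq_n}{q_n}/\binom{n}{q_n}\ge 1-2jq_n^2/n$. Multiplying these single-ratio bounds over $j=0,\ldots,m-1$ and applying the same elementary inequality once more should yield
\[ \prod_{j=0}^{m-1}\frac{\binom{n-jq_n}{q_n}}{\binom{n}{q_n}} \;\ge\; 1-\sum_{j=0}^{m-1}\frac{2jq_n^2}{n} \;=\; 1-\frac{m(m-1)q_n^2}{n} \;\ge\; 1-\frac{k^2}{4}\cdot\frac{q_n^2}{n}, \]
which produces the desired lower bound on $|A_{n,k}|$ with exactly the constant $c_k=k^2/4$.

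The upper bound on $|B_{n,k}|$ is then immediate by subtraction: since $B_{n,k}=P_2(I_n^k)\setminus A_{n,k}$, the two paragraphs above combine to give
\[ \frac{|B_{n,k}|}{(k-1)!!\binom{n}{q_n}^{k/2}} \;\le\; 1-\prod_{j=0}^{m-1}\frac{\binom{n-jq_n}{q_n}}{\binom{n}{q_n}} \;\le\; c_k\,\frac{q_n^2}{n}, \]
with the same $c_k=k^2/4$. The whole computation is elementary; the one subtle point is that both applications of $\prod(1-x_i)\ge 1-\sum x_i$ require the summands to lie in $[0,1]$, and it is exactly here that the hypothesis $q_n\ll\sqrt n$ (for fixed $k$) enters to ensure $2(m-1)q_n^2/n<1$ for $n$ large. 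This is the only place where one has to be careful; everything else is routine.
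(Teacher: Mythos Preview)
Your proposal is correct and follows essentially the same route as the paper: both arguments compute $|A_{n,k}|$ exactly as $(k-1)!!\prod_{j=0}^{m-1}\binom{n-jq_n}{q_n}$, bound the product of binomial ratios from below by an elementary inequality of Bernoulli type, and then obtain the estimate on $|B_{n,k}|$ by subtraction from the trivial upper bound on $|P_2(I_n^k)|$. The only cosmetic difference is in the intermediate step: the paper replaces every ratio by the smallest one, bounding the whole product by $\bigl((n-\tfrac{k}{2}q_n)/n\bigr)^{kq_n/2}$ and then applying $(1-x)^t\ge 1-tx$ once, whereas you bound each ratio separately and invoke $\prod(1-x_i)\ge 1-\sum x_i$ twice; your route even yields the slightly sharper constant $m(m-1)$ before relaxing to $k^2/4$.
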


 Lemma \ref{lemma2}   implies that    ${n\choose q_n}^{-k/2}|B_{n,k}| \to 0 $ and
${n\choose q_n}^{-k/2}|A_{n,k}|\to (k-1)!!$   for $q_n\ll\sqrt n$. Hence,  by \eqref{exp}\eqref{deddc}, we have the limit
$$\sum_{A_{n,k}}= \frac 1{L_n} \frac 1{{n\choose q_n}^{k/2}} \sum_{A_{n,k}}   \Tr I={n\choose q_n}^{-k/2}|A_{n,k}|\to (k-1)!!.$$
By \eqref{tr} again, we have the estimate
$$|\sum_{B_{n,k}}|\leq    \frac 1{{n\choose q_n}^{k/2}}  \sum_{B_{n,k}} 1 ={n\choose q_n}^{-k/2}|B_{n,k}|\to 0.$$
 To summarize, for  $k$ even, we have
\begin{equation}\label{even}m_{n,k}= \sum_{({R_1},...,  {R_k})\in  I_n^k\setminus P_2(I_n^k)}+\sum_{A_{n,k}}+\sum_{B_{n,k}}\to (k-1)!!.\end{equation}
Equations \eqref{odd}\eqref{even} show that $m_{n,k}$ is asymptotic to the $k$-th moment of the standard Gaussian distribution which satisfies Carleman's condition, and thus the expectation of $\rho_n$ will tend to the standard Gaussian distribution by the moment method.
\subsubsection{Proof of case 2}\label{222}
To prove case 2 for $k$ even, we need to treat the summation over $P_2(I_n^k)$   in a different way from case 1.
Let's define the set of $2$ to $1$ maps as \begin{equation}\label{sk}S_k=\left\{\pi: \,\,\{1,2,\cdots, k\}\to \{1,2,\cdots, \frac k2\}| |\pi^{-1}(j)|=2,  1\leq j\leq \frac k2  \right\}.\end{equation}
The cardinality of this set is $|S_k|=(k-1)!!\cdot(k/2)!=k!/2^{k/2}$.

Then  we can rewrite,
\begin{align*}\sum _ {P_2(I_n^k)}&= \frac 1{L_n} \frac {i^{q_nk/2}}{{n\choose q_n}^{k/2}} \sum_{\pi\in S_k} \sum_{R_1,\cdots, R_{\frac k2}\in I_n, R_i \neq R_j \,\mbox{if}\,\, i\neq j}\frac{\e[J_{R_{\pi(1)}}\cdots J_{R_{\pi(k)}}]  \Tr \Psi_{R_{\pi(1)}}\cdots \Psi_{R_{\pi(k)}}}{(k/2)!}\\
&= \frac 1{L_n} \frac {i^{q_nk/2}}{{n\choose q_n}^{k/2}} \sum_{\pi\in S_k} \sum_{R_1,\cdots, R_{\frac k2}\in I_n, R_i \neq R_j \,\mbox{if}\,\, i\neq j}   \frac{\Tr \Psi_{R_{\pi(1)}}\cdots \Psi_{R_{\pi(k)}}}{(k/2)!},
\end{align*}
where we use the fact that $\e[J_{R_{\pi(1)}}\cdots J_{R_{\pi(k)}}]=1$ by \eqref{exp}.

We now introduce the notion of crossing number $\kappa (\pi)$ for a pair-partition $\pi$, which is defined to be the number of subsets
$\{r,s\}\subset  \{1,2,\cdots, \frac k2\}$ such that there exists $1\leq a<b<c<d\leq k, \pi(a)=\pi(c)=r,  \pi(b)=\pi(d)=s$. Let $\{r_1,s_1\},\{r_2,s_2\},\cdots, \{r_{\kappa(\pi)},s_{\kappa(\pi)}\}$ be the crossings of $\pi$. By  \eqref{anticom}   , we easily have
\begin{lem} \label{lemma3}For the fixed $\pi$, we have $$\frac {i^{q_nk/2}} {L_n} \Tr  \Psi_{R_{\pi(1)}}\cdots \Psi_{R_{\pi(k)}}=(-1)^{\sum_{k=1}^{\kappa(\pi)}|R_{r_k}\cap R_{s_k}|}.$$
\end{lem}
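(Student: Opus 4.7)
The plan is to prove Lemma \ref{lemma3} by induction on the number of pairs $p = k/2$, reducing the product $\Psi_{R_{\pi(1)}}\cdots\Psi_{R_{\pi(k)}}$ to a scalar by sliding matched factors together one pair at a time and tracking the signs that come from anticommuting Majorana products.

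First I would record two purely algebraic building blocks, both of which follow directly from \eqref{anticom}. For any $R\subseteq\{1,\dots,n\}$ with $|R|=q_n$ even, iterating the anticommutation to cancel each $\psi_i^2=I$ gives
$$\Psi_R^2 = (-1)^{q_n(q_n-1)/2}I = (-1)^{q_n/2}I = i^{q_n}I.$$
Second, for $R,S$ with $|R|=|S|=q_n$ even, moving each fermion in $\Psi_S$ past all of $\Psi_R$ produces $(-1)^{q_n}$ from indices outside $R\cap S$ and $(-1)^{q_n-1}$ from indices inside $R\cap S$, so the total exponent is $q_n^2-|R\cap S|$, which (since $q_n$ is even) has the same parity as $|R\cap S|$. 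Hence
$$\Psi_R\Psi_S=(-1)^{|R\cap S|}\Psi_S\Psi_R.$$

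For the induction, the base case $p=1$ is immediate from $\Psi_{R_1}^2=i^{q_n}I$ and $\kappa(\pi)=0$. For the step, set $r=\pi(1)$ and let $j$ be the unique other index with $\pi(j)=r$. Using the commutation identity $j-2$ times, I slide the second copy of $\Psi_{R_r}$ leftward to position $2$, producing the factor $\prod_{i=2}^{j-1}(-1)^{|R_r\cap R_{\pi(i)}|}$ and then collapsing $\Psi_{R_r}^2=i^{q_n}$. The key combinatorial observation is that any block $s$ whose two positions both lie in $(1,j)$ contributes $2|R_r\cap R_s|$ to that exponent and hence nothing to the sign, while any block $s$ with exactly one position in $(1,j)$ and the other in $(j,k]$ contributes $|R_r\cap R_s|$ once; the latter configuration is exactly the crossing pattern $\pi(1)=\pi(j)=r,\ \pi(a)=\pi(b)=s$ with $1<a<j<b$. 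So the sign accumulated during this step is exactly $\prod_{s\colon\{r,s\}\text{ crosses}}(-1)^{|R_r\cap R_s|}$. Deleting positions $1$ and $j$ preserves the relative order of the remaining $k-2$ positions, so the crossing structure among pairs not involving $r$ is unchanged, and the induction hypothesis applies.

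Assembling these pieces gives
$$\Psi_{R_{\pi(1)}}\cdots\Psi_{R_{\pi(k)}} = (-1)^{\sum_{m=1}^{\kappa(\pi)}|R_{r_m}\cap R_{s_m}|}\, i^{q_n k/2}\,I,$$
so taking the trace, dividing by $L_n$, and multiplying by the prefactor yields
$$\frac{i^{q_n k/2}}{L_n}\Tr\Psi_{R_{\pi(1)}}\cdots\Psi_{R_{\pi(k)}} = (-1)^{\sum_{m=1}^{\kappa(\pi)}|R_{r_m}\cap R_{s_m}|}\, i^{q_n k}.$$
Since both $q_n$ and $k$ are even, $i^{q_n k}=1$ and the claim follows. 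The main obstacle is the combinatorial bookkeeping in the inductive step: making sure that sliding one copy of $\Psi_{R_r}$ through the intervening factors picks up sign contributions exactly from blocks that cross $r$, and that removing $r$ leaves the remaining crossing pattern intact. Once this is pinned down, everything else is a direct application of the two identities on $\Psi_R^2$ and $\Psi_R\Psi_S$.
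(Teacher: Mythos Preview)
Your proof is correct and is exactly the direct verification the paper intends when it says Lemma~\ref{lemma3} follows ``easily'' from the anticommutation relation \eqref{anticom}; the paper gives no further argument, so your induction on the number of pairs, using $\Psi_R^2=(-1)^{q_n/2}I$ and $\Psi_R\Psi_S=(-1)^{|R\cap S|}\Psi_S\Psi_R$, is simply the spelled-out version of the same approach.
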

If $\pi$ has no crossing,  it reads $$\frac {i^{q_nk/2}}  {L_n} \Tr  \Psi_{R_{\pi(1)}}\cdots \Psi_{R_{\pi(k)}}=1.$$

The following lemma deals with the cardinality of the intersection of the coordinates $|R_{r_k}\cap R_{s_k}|$ where we refer to \cite{So} for the proof.
 \begin{lem}\label{lemma4}When $ q_n^2/n\to a$, if we choose $\{R_1,\cdots, R_{\frac k2}\}$ uniformly from $I_n^{\frac k 2}$ with $R_i\neq R_j$ if $i\neq j$, then the intersection numbers $|R_{r_k}\cap R_{s_k}|,  k=1,\cdots , \kappa(\pi)$ are  approximately independently Poisson(a) distributed. Here, $\{ r_k, s_k\}_{k=1}^{\kappa(\pi)}$ are crossings of $\pi$.

\end{lem}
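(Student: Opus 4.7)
The plan is to establish joint convergence in distribution of the vector of intersection sizes $(|R_{r_j} \cap R_{s_j}|)_{j=1}^{\kappa(\pi)}$ to $\kappa(\pi)$ independent Poisson$(a)$ random variables via the method of joint factorial moments. Writing $X_j := |R_{r_j} \cap R_{s_j}|$ and $(x)_m := x(x-1)\cdots(x-m+1)$, it suffices (since the Poisson distribution is moment-determined and the $X_j$ are nonnegative integer-valued) to show that for any fixed nonnegative integers $m_1,\ldots,m_\kappa$,
$$\e \prod_{j=1}^\kappa (X_j)_{m_j} \longrightarrow \prod_{j=1}^\kappa a^{m_j}.$$
Equivalently, the joint probability generating function $\e \prod_j (1+u_j)^{X_j}$ converges to $\prod_j e^{a u_j}$, which is the joint PGF of independent Poisson$(a)$'s.

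First I would reduce to iid sampling: choosing $(R_1,\ldots,R_{k/2})$ uniformly among ordered $(k/2)$-tuples of distinct elements of $I_n$ differs from iid uniform sampling by $O(|I_n|^{-1}) = O(\binom{n}{q_n}^{-1})$ in total variation, so I may assume the $R_i$ are iid uniform random $q_n$-subsets of $\{1,\ldots,n\}$. Next, using $\binom{X_j}{m_j} = \#\{T \subseteq R_{r_j} \cap R_{s_j} : |T| = m_j\}$ and the independence of the $R_i$,
$$\e \prod_{j=1}^\kappa \binom{X_j}{m_j} = \sum_{(T_1,\ldots,T_\kappa)} \prod_{i=1}^{k/2} \frac{(q_n)_{\alpha_i}}{(n)_{\alpha_i}}, \qquad \alpha_i := \Big|\bigcup_{j\,:\, i \in \{r_j,s_j\}} T_j\Big|,$$
where the outer sum runs over tuples of subsets $T_j \subseteq \{1,\ldots,n\}$ with $|T_j|=m_j$, and each factor is the hypergeometric probability $P(\bigcup_{j:i\in\{r_j,s_j\}}T_j \subseteq R_i)$.

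I then split this sum into the \emph{disjoint regime} (all $T_j$ mutually disjoint) and the \emph{overlapping regime}. In the disjoint regime, $\alpha_i = \sum_{j : i \in \{r_j,s_j\}} m_j$, hence $\sum_i \alpha_i = 2\sum_j m_j$; each hypergeometric factor equals $(q_n/n)^{\alpha_i}(1+O(n^{-1}))$, and the number of disjoint tuples is $\sim n^{\sum m_j}/\prod m_j!$, yielding
$$\frac{(q_n^2/n)^{\sum_j m_j}}{\prod_j m_j!}\,(1+o(1)) \longrightarrow \prod_{j=1}^\kappa \frac{a^{m_j}}{m_j!},$$
which matches the joint Poisson factorial moment after multiplying by $\prod m_j!$. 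For the overlapping regime I use the decomposition $\sum_i \alpha_i = \sum_{t \in U}|\bigcup_{j:t\in T_j}\{r_j,s_j\}|$ with $U := \bigcup_j T_j$: if some $t \in U$ belongs to two or more of the $T_j$, its contribution is $\geq 3$, because distinct crossings of $\pi$ correspond to distinct unordered pairs in $\{1,\ldots,k/2\}$ and so their index sets cannot coincide. This gives $\sum_i \alpha_i \geq 2u+1$ for $u := |U| \leq \sum_j m_j - 1$, so the overlap contribution for each value of $u$ is bounded by
$$\lesssim n^u \cdot (q_n/n)^{2u+1} = (q_n^2/n)^u \cdot (q_n/n) = O(q_n/n) \longrightarrow 0,$$
since $q_n = O(\sqrt n)$ in the regime $q_n^2/n \to a$.

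The main obstacle is precisely the overlapping regime: when several crossings share indices in $\{1,\ldots,k/2\}$, the factors $(q_n)_{\alpha_i}/(n)_{\alpha_i}$ become coupled through common $R_i$'s, and the crude bound gives a contribution of order $(q_n^2/n)^u$ per value of $u$, which does \emph{not} vanish on its own. The decisive gain is the $+1$ in $\sum_i \alpha_i \geq 2u+1$, which rests on the elementary but crucial observation that distinct crossings are distinct pairs; this produces the extra factor $q_n/n \to 0$ that kills the overlap terms and isolates the disjoint regime as the sole source of the Poisson limit.
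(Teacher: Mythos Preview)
The paper does not actually prove this lemma; it simply cites Erd\H{o}s--Schr\"oder \cite{So} for the proof. Your self-contained argument via joint factorial moments is correct and is the standard route (and essentially the one used in \cite{So} for the analogous spin-glass statement): write $\e\prod_j\binom{X_j}{m_j}$ as a sum over tuples $(T_1,\dots,T_\kappa)$ of $m_j$-subsets, isolate the pairwise-disjoint regime which produces $\prod_j a^{m_j}/m_j!$, and show the overlapping regime is negligible. The key inequality $\sum_i\alpha_i\geq 2u+1$ when $u=|\bigcup_j T_j|<\sum_j m_j$, resting on the fact that distinct crossings are distinct unordered pairs in $\{1,\dots,k/2\}$ (hence their union has size $\geq 3$), is exactly the right mechanism.

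One minor point worth making explicit: your reduction from distinct to i.i.d.\ sampling bounds the difference by $\big\|\prod_j\binom{X_j}{m_j}\big\|_\infty\cdot d_{TV}$, not $d_{TV}$ alone, and $\big\|\prod_j\binom{X_j}{m_j}\big\|_\infty$ is of order $q_n^{\sum m_j}$. This is harmless since $\binom{n}{q_n}^{-1}$ decays faster than any fixed power of $q_n$ in the regime $q_n^2/n\to a$ (super-polynomially if $q_n\to\infty$, and $q_n$ bounded otherwise), but the sentence ``differs \dots\ by $O(\binom{n}{q_n}^{-1})$ in total variation, so I may assume'' elides this and should be tightened.
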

With Lemma \ref{lemma3} and Lemma \ref{lemma4}, for any fixed  map $\pi$, we have
\begin{align*}&\lim_{n\to \infty}  \frac 1{L_n} \frac {i^{q_nk/2}}{{n\choose q_n}^{k/2}} \sum_{R_1,\cdots, R_{\frac k2}\in I_n, R_i \neq R_j \,\mbox{if}\,\, i\neq j}   \Tr \Psi_{R_{\pi(1)}}\cdots \Psi_{R_{\pi(k)}}\\
&=\lim_{n\to \infty}   \frac 1{{n\choose q_n}^{k/2}} \sum_{R_1,\cdots, R_{\frac k2}\in I_n, R_i \neq R_j \,\mbox{if}\,\, i\neq j}  (-1)^{\sum_{k=1}^{\kappa(\pi)}|R_{r_k}\cap R_{s_k}|} \\
&=\sum_{m_i\geq 0, 1\leq i\leq \kappa(\pi)}(-1)^{m_1+\cdots+m_{\kappa(\pi)}}\frac{a^{m_1+\cdots+m_{\kappa(\pi)}}}{m_1!\cdots m_{\kappa(\pi)}!}e^{-a\kappa(\pi)}\\
&=e^{-2a\kappa(\pi)}.
\end{align*}
Therefore, we have
\begin{equation}\label{mka}\lim_{n\to \infty}\sum _ {P_2(I_n^k)}
 = \frac{1}{(k/2)!}\sum_{\pi\in S_k} e^{-2a\kappa(\pi)}:=m_k^a.
\end{equation}
Defining $m_k^a=0 $ for $k$ odd, then we have proved
$$\lim_{n\to\infty} m_{n,k}=
   m^a_k .$$

The theory of $q$-Hermite polynomials \cite{So, ISV} implies that the  moments $m_k^a$ correspond to the density function $p_a(x)$ given in Theorem \ref{main1}.  $p_a(x)$ has compact support,  and thus  $p_a(x)$ satisfies Carleman's condition. Therefore,  the moment method implies that the expectation of the normalized empirical measure $\mathbb \rho_n$ indeed converges to $p_a(x)dx$ as $n\to \infty$ for $a\in(0,\infty)$. In fact, $m_k^a $ is also well defined for $a=0$ where it's easy to get $m_k^0=(k-1)!!$ if $k$ is even and $m_k^0=0$ if $k$ is odd, which is the $k$-th moment of the standard Gaussian distribution, and the method for case 2 can be also  used to prove case 1.

\subsubsection{Proof of case 3}
 Heuristically, it seems that if we take $a_n:=q_n^2/n\to \infty$ in \eqref{mka} where the summation $\sum _ {P_2(I_n^k)}$ is approximated to $ \frac{1}{(k/2)!}\sum_{\pi\in S_k} e^{-2a_n\kappa(\pi)}$, then only the non-crossing $\pi$  with $\kappa(\pi)=0$ contributes to this summation as $n\to\infty$. By counting the number of such non-crossing $\pi$, we have $$m_{n,k}\to \frac{k!}{( k/2)!(  k/2+1)!},$$ i.e.,   the Catalan numbers, which is the $k$-th moment of the semicircle law.
The strategy to prove the above argument rigorously is to prove
\begin{lem}\label{lemma5}For the fixed map $\pi$, let's assume $\kappa(\pi)\geq 1$. If $q_n^2/n\to\infty$ and $q_n\leq n/2$, if we choose $R_j$ uniformly, then asymptotically we have
 \begin{align*}\lim_{n\to\infty}\mathbb E[(-1)^{\sum_{k=1}^{\kappa(\pi)}|R_{r_k}\cap R_{s_k}|}]=0.\end{align*}
\end{lem}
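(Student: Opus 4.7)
The plan is to reduce the $\kappa(\pi)$-fold intersection expectation to a single hypergeometric-type expectation by conditioning, and then to estimate this using a saddle-point / local CLT analysis on the generating function of the hypergeometric distribution.

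\emph{Step 1 (reduction by conditioning).} Since $\kappa(\pi)\geq 1$, fix a block $j_0\in\{1,\dots,k/2\}$ lying in at least one crossing of $\pi$, and let $P(j_0)$ denote its set of partners, of cardinality $m\geq 1$. Condition on $\{R_j:j\neq j_0\}$. The terms $|R_{r_\ell}\cap R_{s_\ell}|$ with $j_0\notin\{r_\ell,s_\ell\}$ are deterministic under the conditioning, while the remaining ones sum to $\sum_{p\in P(j_0)}|R_{j_0}\cap R_p|$, whose parity coincides with that of $|R_{j_0}\cap T|$, where $T:=\triangle_{p\in P(j_0)}R_p$ is the iterated symmetric difference (notation as in \S\ref{basicfermion}). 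Hence
\[\e\bigl[(-1)^{\sum_\ell|R_{r_\ell}\cap R_{s_\ell}|}\bigr]=\e\Bigl[(-1)^S\,\e_{R_{j_0}}\!\bigl[(-1)^{|R_{j_0}\cap T|}\bigm|\{R_j\}_{j\neq j_0}\bigr]\Bigr],\]
with $|(-1)^S|=1$, so it suffices to show the inner expectation is $o(1)$ in probability.

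\emph{Step 2 (hypergeometric bound).} For deterministic $T\subseteq\{1,\dots,n\}$ of size $t$ and $R$ uniform over $q_n$-subsets of $\{1,\dots,n\}$ (the requirement that $R_{j_0}$ differ from the $O(1)$ other sets costs only a $1+o(1)$ factor), one has the identity
\[\e_R\bigl[(-1)^{|R\cap T|}\bigr]=\binom{n}{q_n}^{-1}[y^{q_n}](1-y)^t(1+y)^{n-t}.\]
Writing this as a contour integral on $|y|=1$ and using $|1\pm e^{i\phi}|=2|\sin(\phi/2)|,\,2|\cos(\phi/2)|$, a saddle-point/Laplace analysis of the resulting trigonometric integral—equivalently the local CLT for the hypergeometric distribution $H(n,t,q_n)$ of variance $\sigma^2\asymp q_n t(n-q_n)(n-t)/n^3$—yields, for $q_n,t\leq n/2$,
\[\bigl|\e_R[(-1)^{|R\cap T|}]\bigr|\leq C\exp\bigl(-c\,q_n t(n-t)/n^2\bigr)\]
for absolute constants $c,C>0$. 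The complementary cases $t>n/2$ or $q_n>n/2$ reduce to this by replacing $T$ or $R$ with its complement (the sign is unchanged because $q_n$ is even).

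\emph{Step 3 (typicality and conclusion).} By coordinatewise near-independence, $\mathbf{P}(i\in T)\approx\tfrac12(1-(1-2q_n/n)^m)$, so $\e|T|\gtrsim q_n$ (using $q_n\leq n/2$ and $m\geq 1$). Since $|T|$ is $(2q_n)$-Lipschitz under replacement of any single $R_p$, Azuma's inequality yields $|T|\geq q_n/2$ with probability $1-o(1)$. On this event, Step~2 gives $|\e_{R_{j_0}}[(-1)^{|R_{j_0}\cap T|}]|\leq C\exp(-c'q_n^2/n)=o(1)$ by the hypothesis $q_n^2/n\to\infty$; on the complement, bound the inner expectation trivially by $1$. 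Taking total expectation closes the argument. The technical crux is the quantitative saddle-point estimate in Step 2: one needs the bound to be uniform in $t$ and to degrade gracefully as $t$ nears $0$ or $n$. The conditioning in Step 1 and the concentration in Step 3 are engineered precisely to confine the analysis to the favorable regime $|T|\asymp q_n$, where the saddle-point bound is most transparent.
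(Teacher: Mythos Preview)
Your conditioning scheme in Step~1 is clean and correct, and your Step~2 identifies the same hypergeometric quantity the paper calls $F(t,q_n,n)$; the paper proves the bound $|F(p,q,m)|\le e^{-\min(p,m-p)\min(q,m-q)/(2m)}$ by an elementary binomial argument rather than a saddle-point sketch, but the target inequality is the same. The genuine gap is in Step~3: Azuma/McDiarmid does \emph{not} yield $|T|\ge q_n/2$ with probability $1-o(1)$. You have $m=|P(j_0)|$ independent inputs, each with bounded difference $2q_n$, so the bound reads
\[
\mathbb{P}\bigl(|T|-\e|T|\le -t\bigr)\le \exp\!\Bigl(-\frac{2t^2}{\sum_{p}(2q_n)^2}\Bigr)=\exp\!\Bigl(-\frac{t^2}{2m\,q_n^2}\Bigr),
\]
and for $t$ of order $q_n$ this is $\exp(-c/m)$, a fixed constant since $m\le k/2-1$ is bounded. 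No choice of $t$ helps: the scale of the fluctuation allowed by McDiarmid is $\sqrt{m}\,q_n$, which is the same order as $\e|T|$ itself.

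The fix is to compute the variance directly and use Chebyshev, exactly as the paper does for its analogous quantities $|V_{i,j}|$ and $n-|S_{i,j}|-|V_{i,j}|$. A short calculation gives $\mathrm{Cov}(\mathbf 1_{i\in T},\mathbf 1_{j\in T})=O(1/n)$ for $i\neq j$, hence $\mathrm{Var}(|T|)=O(n)$, and then $\mathbb{P}(|T|<q_n/2)\le O(n)/q_n^2\to 0$ by hypothesis. You should also note that $\e|T|\le n/2$ (since $(1-2q_n/n)^m\ge 0$ when $q_n\le n/2$), so the same Chebyshev bound gives $n-|T|\ge n/4$ with probability $1-O(1/n)$; this is needed for your Step~2 exponent $q_n t(n-t)/n^2$ to diverge, and your complement trick in Step~2 tacitly assumes $q_n$ even, which the lemma does not. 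With these repairs your route is a genuine simplification of the paper's: the paper conditions on a $\sigma$-algebra $\mathcal F_{i,j}$ that retains partial information about \emph{two} blocks $R_i,R_j$ (their intersections with $S_{i,j}=\cup_{k\ne i,j}R_k$), leaving two conditionally independent ``free parts'' $V_{i,j},V_{j,i}$ and leading to the two-parameter estimate $|F(|V_{i,j}|,|V_{j,i}|,n-|S_{i,j}|)|$. Your symmetric-difference identity $\sum_{p}|R_{j_0}\cap R_p|\equiv |R_{j_0}\cap(\triangle_p R_p)|\pmod 2$ collapses the multi-partner case to a single free set $R_{j_0}$, which is structurally tidier; the price is that the size of $T$ is genuinely random and needs the variance bound above.
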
Here, the condition $R_i\neq R_j$ if $i\neq j$ is omitted since the probability of $R_i= R_j$ tends to $0.$
By Lemma \ref{lemma5} and following the arguments in \S \ref{222} above,  we know that only non-crossing $\pi$ with $\kappa(\pi)=0$ contributes to the summation $\sum _ {P_2(I_n^k)}$ , this will yield the Catalan number and  hence the semicircle law.
\subsection{Proof of lemmas}\label{lemmas}
In this subsection, let's prove Lemmas \ref{lemma1},  \ref{lemma2} and \ref{lemma5}.
\subsubsection{Proof of Lemma 1}

\begin{proof}
By assumption, $J_{R}$ are independent with mean $0$ and variance $1$, i.e., $\e J_{R}=0$ and  $\e J_{R_i}J_{R_j}=0$ if $i\neq j$, this implies the case when $k=1$ or $2$ in Lemma \ref{lemma1}.

For $k\geq 3$, by assumption that  $J_{R_i}$ have uniformly bounded moments, i.e.,  for any fixed $k$, there exists $c_k$ such that $|\e J_{R}^k|\leq c_k$ for all $R$ in $I_n$, we will easily have  $|\e J_{R_1}\cdots J_{R_k} |\leq c_k$. Furthermore, if some $R_i$ appears only once in $(R_1,\cdots, R_k)$, then $\e J_{R_1}\cdots J_{R_k} =0$. Hence,
\begin{align*}\frac 1{{n\choose q_n}^{k/2}}  \sum_{({R_1},...,  {R_k})\in I^k_n\setminus P_2(I_n^k)}| \e[J_{R_1}\cdots J_{R_k}]| \leq \frac {c_k}{{n\choose q_n}^{k/2}}  \sum_{({R_1},...,  {R_k})\in I^k_n\setminus P_2(I_n^k), \# R_i\geq 2 } 1. \end{align*}
Let's denote $k_n:=|\{({R_1},...,  {R_k})\in I^k_n\setminus P_2(I_n^k), \# R_i\geq 2 \}|$, where $ \#A:=|\{j|1\leq j\leq k,R_j=A\}|$ for $A\in I_n$. Then for $({R_1},...,  {R_k})\in I^k_n$ with $\# R_i\geq 2 $, we have\begin{align*} k=\sum_{A\in\{R_j|1\leq j\leq k\}}\#A\geq\sum_{A\in\{R_j|1\leq j\leq k\}}2=2|\{R_j|1\leq j\leq k\}|. \end{align*} The equality holds if and only if $({R_1},...,  {R_k})\in  P_2(I_n^k)$. Thus,  if $({R_1},...,  {R_k})\in I^k_n\setminus P_2(I_n^k)$ with  $\# R_i\geq 2 $,  then $|\{R_j|1\leq j\leq k\}|<k/2.$

 If $k\geq 3$ is odd, then $|\{R_j|1\leq j\leq k\}|\leq (k-1)/2 $.  For $n$ large enough, we have \begin{align*} k_n&\leq \sum_{B\subseteq I_n,|B|=(k-1)/2}|\{({R_1},...,  {R_k})\in I^k_n|  R_i\in B,\ \forall\ 1\leq i\leq k \}|\\&=\sum_{B\subseteq I_n,|B|=(k-1)/2}\left(\frac{k-1}{2}\right)^k=\left(\frac{k-1}{2}\right)^k{|I_n|\choose {\frac {k-1}2}}\leq c_k |I_n|^{\frac {k-1}2}, \end{align*}
where $|I_n|={n\choose q_n}$. This further implies
$$ \frac {c_k}{{n\choose q_n}^{k/2}}  \sum_{({R_1},...,  {R_k})\in I^k_n\setminus P_2(I_n^k), \# R_i\geq 2 } 1 \leq  \frac {c_k}{{n\choose q_n}^{k/2}} |I_n|^{\frac {k-1}2}=c_k{n\choose q_n}^{-\frac 12},$$
 which completes the case when $k\geq 3$ is odd.

 Similarly, if $k\geq 4$ is even, then $|\{R_j|1\leq j\leq k\}|\leq k/2-1 $ and
 \begin{align*} k_n\leq\sum_{B\subseteq I_n,|B|=k/2-1}|\{({R_1},...,  {R_k})\in I^k_n|  R_i\in B,\ \forall\ 1\leq i\leq k \}|\leq c_k |I_n|^{\frac {k}2-1}. \end{align*}
  This implies $$ \frac {c_k}{{n\choose q_n}^{k/2}}  \sum_{({R_1},...,  {R_k})\in I^k_n\setminus P_2(I_n^k), \# R_i\geq 2 } 1 \leq  \frac {c_k}{{n\choose q_n}^{k/2}} |I_n|^{\frac {k}2-1}=c_k{n\choose q_n}^{-1},$$
 which completes the case when $k\geq 4$ is even.
\end{proof}
\subsubsection{Proof of Lemma 2}
\begin{proof}
We first have the upper bound $|A_{n,k}|\leq |P_2(I_n^k)|\leq|I_n|^{\frac k 2}(k-1)!!={n\choose q_n}^{\frac k2}(k-1)!!$. For the lower bound, we have
\begin{align*} |A_{n,k}|&={n\choose q_n}{n- q_n\choose q_n}\cdots {n- (\frac k2 -1)q_n\choose q_n}(k-1)!!\\
&\geq \left(\frac{{n- (\frac k2 -1)q_n\choose q_n}}{{n\choose q_n}}\right)^{\frac k2}{n\choose q_n}^{\frac k2}(k-1)!!\\
&\geq \left(\frac{n-\frac k2 q_n}n\right)^{\frac k 2 q_n}{n\choose q_n}^{\frac k2}(k-1)!!\\
&\geq \left(1-\frac {k^2 q_n^2}{4n}\right) {n\choose q_n}^{\frac k2}(k-1)!!,
\end{align*}
where in the last inequality, we use the inequality $(1-x)^t\geq 1-tx$ for $0<x<1$ and $t\geq 1$.
By the lower bound of $|A_{n,k}|$, we have the upper bound
$$|B_{n,k}|= |P_2(I_n^k)|-|A_{n,k}|\leq \frac {k^2 q_n^2}{4n} (k-1)!! {n\choose q_n}^{\frac k2},$$
this finishes Lemma \ref{lemma2}.
\end{proof}


\subsubsection{Proof of Lemma \ref{lemma5}}
\begin{proof}
Let's assume $R_1,\cdots, R_l$ are mutually independent. For $1\leq i<j\leq l,$
let $S_{i,j}=\cup_{k\neq i,j}R_k,\ T_{i,j}=R_i\cap S_{i,j}$ and  $V_{i,j}=R_i\setminus S_{i,j}.$ Let $ \mathcal{F}_{i,j}$ be the $ \sigma$-algebra generated by $R_k$ for every $k\neq i,j$ and $T_{i,j},\, T_{j,i}. $ The whole proof is based on the estimate of the following conditional expectation\begin{equation}\label{lll}\l_{i,j}=\mathbb E[(-1)^{|R_i\cap R_{j}|} \big|  \mathcal{F}_{i,j}]=(-1)^{|R_i\cap R_{j}\cap S_{i,j}|}\mathbb E[(-1)^{|V_{i,j}\cap V_{j,i}|} \big|  \mathcal{F}_{i,j}].
\end{equation}Conditioning on $\mathcal{F}_{i,j} $, $V_{j,i}$ is uniformly distributed among all the subsets of $ I_n\setminus S_{i,j}$ with exactly $|V_{j,i}|$ elements (notice that $|V_{j,i}|=|R_i|-|T_{j,i}|$ is measurable in $ \mathcal{F}_{i,j}$),  $V_{i,j}$ and $V_{j,i}$ are conditionally independent. Then we have\begin{align*} \mathbb P(|V_{i,j}\cap V_{j,i}|=k \big| \mathcal{F}_{i,j})={|V_{i,j}|\choose {k }}{|I_n\setminus S_{i,j}|-|V_{i,j}|\choose {|V_{j,i}|-k }}/{{|I_n\setminus S_{i,j}|\choose |V_{j,i}|}}
\end{align*} and\begin{align*} \mathbb E[(-1)^{|V_{i,j}\cap V_{j,i}|} \big|  \mathcal{F}_{i,j}]=F(|V_{i,j}|,|V_{j,i}|,n-|S_{i,j}|)
\end{align*} where \begin{align*}F(p,q,m)={\sum_k(-1)^k{p\choose {k }}{m-p\choose {q-k }}}/{{m\choose q}}
\end{align*} for $p,q,m\in\mathbb{Z}$ and $p,q\in [0,m]$. Regarding $F(p,q,m)$, we further have   \begin{equation}\label{rela}F(p,q,m)=F(q,p,m)=(-1)^qF(m-p,q,m)=(-1)^pF(p,m-q,m),
\end{equation}which indicates that if we want to estimate $F(p,q,m)$, it's enough to consider the case $p,q\in [0,m/2]$. Since $ F(p,q,m){m\choose q}$ is the coefficient of $x^q$ in the polynomial $(1-x)^p(1+x)^{m-p}=(1-x^2)^p(1+x)^{m-2p},$ we have
$$F(p,q,m)={\sum_k(-1)^k{p\choose {k }}{m-2p\choose {q-2k }}}/{{m\choose q}}.$$
If $m\geq2(p+q)$, then we have $q-2k\leq q-k$ and $ (q-2k)+(q-k)\leq m-2p$ for $k\geq0$, and thus  ${m-2p\choose {q-2k }}\leq {m-2p\choose {q-k }} $.

 Hence, for $m\geq2(p+q)$, we have
\begin{align*}|F(p,q,m)|&\leq{\sum_k{p\choose {k }}{m-2p\choose {q-2k }}}/{{m\choose q}}
\leq{\sum_k{p\choose {k }}{m-2p\choose {q-k }}}/{{m\choose q}}
\\&={m-p\choose {q }}/{{m\choose q}}=\prod_{k=0}^{q-1}\frac{m-p-k}{m-k}\\
&\leq\prod_{k=0}^{q-1}\frac{m-p}{m}=\left(\frac{m-p}{m}\right)^q\leq e^{-\frac{pq}{m}}.
\end{align*}If $m\leq2(p+q),$ let $r=\lceil\frac{1}{2}(\frac{m}{2}-p+q)\rceil$, i.e., $r-1<\frac{1}{2}(\frac{m}{2}-p+q)\leq r,$ then we have $q/2\leq r\leq q.$  For any integer $k\in [0,p]$ such that $0\leq q-2k\leq m-2p,$ there are two cases we need to consider:

\textcircled{1} If $r-k\leq q-2k$, then $k\leq q-r$ and $(r-k)+( q-2k)=r+q-3k\geq r+q-3(q-r)=4r-2q\geq4\cdot\frac{1}{2}\left(\frac{m}{2}-p+q\right)-2q=m-2p$, hence, we have ${m-2p\choose {q-2k }}\leq {m-2p\choose {r-k }}$.

\textcircled{2} If $r-k> q-2k$, then $k\geq q-r+1$ and $(r-k)+( q-2k)=r+q-3k\leq r+q-3(q-r+1)=4(r-1)-2q+1
<4\cdot\frac{1}{2}\left(\frac{m}{2}-p+q\right)-2q+1=m-2p+1$ and $(r-k)+( q-2k)\leq m-2p$, and
thus we also have ${m-2p\choose {q-2k }}\leq {m-2p\choose {r-k }}$.

Therefore, by combining cases \textcircled{1}  \textcircled{2}, for $m\leq2(p+q),$ we always have
\begin{align*}|F(p,q,m)|&\leq{\sum_k{p\choose {k }}{m-2p\choose {q-2k }}}/{{m\choose q}}
\leq{\sum_k{p\choose {k }}{m-2p\choose {r-k }}}/{{m\choose q}}
\\&={m-p\choose {r }}/{{m\choose q}}\leq{m-p\choose {r }}/{{m\choose r}}\leq e^{-\frac{pr}{m}}\leq e^{-\frac{pq}{2m}}.
\end{align*}For the general case $p,q\in [0,m],$ by \eqref{rela} and the estimates above,  we finally have \begin{equation}\label{esss}
	|F(p,q,m)|=|F(\min(p,m-p),\min(q,m-q),m)|\leq e^{-\frac{\min(p,m-p)\min(q,m-q)}{2m}}.
\end{equation}We denote $$a_{i,j}:=\min(n-|S_{i,j}|-|V_{i,j}|,|V_{i,j}|).$$ By defintion  \eqref{lll} and the estimate \eqref{esss},
we have \begin{equation}\label{lambda}|\l_{i,j}|=|F(|V_{i,j}|,|V_{j,i}|,n-|S_{i,j}|)|\leq e^{-\frac{a_{i,j}a_{j,i}}{2(n-|S_{i,j}|)}}.
\end{equation}All the rest is to derive the bounds of $a_{i,j} $ and $ a_{j,i}$ in probability in order to control $|\lambda_{i,j}|$, where we  need to estimate the expectation and variance of $n-|S_{i,j}|-|V_{i,j}|$ and $|V_{i,j}|$ first. We notice that $n-|S_{i,j}|-|V_{i,j}|=n-|S_{i,j}\cup V_{i,j}|=n-|S_{i,j}\cup R_{i}|=n-|\cup_{k\neq j}R_k|,$ then we have\begin{align*}&\,\,\,\mathbb E[n-|S_{i,j}|-|V_{i,j}|]=\sum_{m=1}^n\mathbb P(m\not\in R_k,\forall\ k\neq j)\\&=n\left({n-1\choose {q_n }}/{{n\choose q_n}}\right)^{l-1}=n\left(1-\frac{q_n}{n}\right)^{l-1}
\end{align*} and \begin{align*}&\,\,\,\mathbb E{n-|S_{i,j}|-|V_{i,j}|\choose 2}=\sum_{1\leq p<q\leq n}\mathbb P(p,q\not\in R_k,\forall\ k\neq j)\\&={n\choose 2}\left({n-2\choose {q_n }}/{{n\choose q_n}}\right)^{l-1}={n\choose 2}\left(1-\frac{q_n}{n}\right)^{l-1}\left(1-\frac{q_n}{n-1}\right)^{l-1}.
\end{align*}This implies that the variance \begin{align*}&\,\,\,\,\,\,\text{var}[n-|S_{i,j}|-|V_{i,j}|]\\&=\mathbb E\left|n-|S_{i,j}|-|V_{i,j}|\right|^2-\left|n\left(1-{q_n}/{n}\right)^{l-1}\right|^2\\&=2\mathbb E{n-|S_{i,j}|-|V_{i,j}|\choose 2}+\mathbb E[n-|S_{i,j}|-|V_{i,j}|]-n^2\left(1-\frac{q_n}{n}\right)^{2(l-1)}\\
&={n(n-1)}\left(1-\frac{q_n}{n}\right)^{l-1}\left(1-\frac{q_n}{n-1}\right)^{l-1} +n\left(1-\frac{q_n}{n}\right)^{l-1}-n^2\left(1-\frac{q_n}{n}\right)^{2(l-1)}\\&={n}\left(1-\frac{q_n}{n}\right)^{l-1}
\left((n-1)\left(1-\frac{q_n}{n-1}\right)^{l-1} +1-n\left(1-\frac{q_n}{n}\right)^{l-1}\right)\\&\leq{n}\left(1-\frac{q_n}{n}\right)^{l-1}
\left(1-\left(1-\frac{q_n}{n}\right)^{l-1}\right)\leq{n}\left(1-\frac{q_n}{n}\right)^{l-1}.
\end{align*} Therefore, we further have (using $q_n\leq n/2$) \begin{equation}\label{p1}\begin{aligned}&\,\,\,\mathbb P\left(n-|S_{i,j}|-|V_{i,j}|\leq \frac{n}{2}\left(1-\frac{q_n}{n}\right)^{l-1}\right)\\&\leq\text{var}[n-|S_{i,j}|-|V_{i,j}|]
\left|\frac{n}{2}\left(1-\frac{q_n}{n}\right)^{l-1}\right|^{-2}\\&\leq{n}\left(1-\frac{q_n}{n}\right)^{l-1}
\left|\frac{n}{2}\left(1-\frac{q_n}{n}\right)^{l-1}\right|^{-2}\\&
=\frac{4}{n}\left(1-\frac{q_n}{n}\right)^{-(l-1)}\leq\frac{2^{l+1}}{n}.
\end{aligned}\end{equation}Similarly, we have  \begin{align*}\mathbb E[|V_{i,j}|]&=\sum_{m=1}^n\mathbb P(m\in V_{i,j})=\sum_{m=1}^n\mathbb P(m\in R_i,m\not\in R_k,\forall\ k\neq i,j)\\&=n{n-1\choose {q_n-1 }}/{{n\choose q_n}}\left({n-1\choose {q_n }}/{{n\choose q_n}}\right)^{l-2}=q_n\left(1-\frac{q_n}{n}\right)^{l-2}
\end{align*} and \begin{align*}\mathbb E{|V_{i,j}|\choose 2}&=\sum_{1\leq p<q\leq n}\mathbb P(p,q\in V_{i,j})\\&={n\choose 2}{n-2\choose {q_n-2 }}/{{n\choose q_n}}\left({n-2\choose {q_n }}/{{n\choose q_n}}\right)^{l-2}\\&={q_n\choose 2}\left(1-\frac{q_n}{n}\right)^{l-2}\left(1-\frac{q_n}{n-1}\right)^{l-2}.
\end{align*}Hence, we have \begin{align*}\text{var}[|V_{i,j}|]&=\mathbb E|V_{i,j}|^2-\left|\mathbb E[|V_{i,j}|]\right|^2 =2\mathbb E{|V_{i,j}|\choose 2}+\mathbb E[|V_{i,j}|]-\left|\mathbb E[|V_{i,j}|]\right|^2\\&={q_n}\left(1-\frac{q_n}{n}\right)^{l-2}
\left((q_n-1)\left(1-\frac{q_n}{n-1}\right)^{l-2} +1-q_n\left(1-\frac{q_n}{n}\right)^{l-2}\right)\\&\leq{q_n}\left(1-\frac{q_n}{n}\right)^{l-2}
\left(1-\left(1-\frac{q_n}{n}\right)^{l-2}\right)\leq{q_n}\left(1-\frac{q_n}{n}\right)^{l-2}
(l-2)\frac{q_n}{n}.
\end{align*}Therefore, we can derive \begin{equation}\label{p2}\begin{aligned}  &\mathbb P\left(|V_{i,j}|\leq \frac{q_n}{2}\left(1-\frac{q_n}{n}\right)^{l-2}\right)\leq\text{var}[|V_{i,j}|]
\left|\frac{q_n}{2}\left(1-\frac{q_n}{n}\right)^{l-2}\right|^{-2}\leq\frac{2^l(l-2)}{n}.
\end{aligned}\end{equation}Since $a_{i,j}=\min(|V_{i,j}|,n-|S_{i,j}|-|V_{i,j}|)$,  by the estimates $$\frac{q_n}{2^{l-1}}\leq\frac{q_n}{2}\left(1-\frac{q_n}{n}\right)^{l-2}\leq \frac{n-q_n}{2}\left(1-\frac{q_n}{n}\right)^{l-2}=\frac{n}{2}\left(1-\frac{q_n}{n}\right)^{l-1}$$ and the estimates \eqref{p1}\eqref{p2}, we   have\begin{align*}&\,\,\,\,\,\,\,\,\mathbb P\left(a_{i,j}\leq q_n/2^{l-1}\right)\\&\leq\mathbb P\left(|V_{i,j}|\leq \frac{q_n}{2}\left(1-{q_n}/{n}\right)^{l-2}\right)+\mathbb P\left(n-|S_{i,j}|-|V_{i,j}|\leq \frac{n}{2}\left(1-{q_n}/{n}\right)^{l-1}\right)\\&\leq\frac{2^{l+1}}{n}+\frac{2^l(l-2)}{n}=\frac{2^ll}{n}.
\end{align*}Recall  \eqref{lambda}, we finally have  \begin{align*}\mathbb E|\l_{i,j}|&\leq\mathbb E e^{-\frac{a_{i,j}a_{j,i}}{2(n-|S_{i,j}|)}}\\&\leq
e^{-\frac{(q_n/2^{l-1})^2}{2n}}+
\mathbb P(a_{i,j}\leq q_n/2^{l-1})
+\mathbb P(a_{j,i}\leq q_n/2^{l-1})\\ &\leq e^{-2^{1-2l}\frac{q_n^2}{n}}+\frac{2^ll}{n}+\frac{2^ll}{n}.
\end{align*}
Now we are ready to finish the proof of Lemma \ref{lemma5} where we  replace $l:=\kappa(\pi)$.
We notice that if $k\neq i,j$, then $R_i\cap R_k=T_{i,j}\cap R_k$ and $|R_i\cap R_k| $ are measurable in $ \mathcal{F}_{i,j}$, so is  $|R_j\cap R_k|.$ If  $\{k,m\}\subseteq I_n\setminus\{ i,j\}$, then $R_m\cap R_k$ and $|R_m\cap R_k| $ are measurable in $ \mathcal{F}_{i,j}.$ Therefore, if $q_n^2/n\to \infty$, $q_n \leq n/2 $ and $\kappa(\pi)>0$, we have\begin{align*}&\,\,\,\,\,\,\,\left|\mathbb E\left[(-1)^{\sum_{k=1}^{\kappa(\pi)}|R_{r_k}\cap R_{s_k}|}\right]\right|\\&=\left|\mathbb E\left[\mathbb E\left[(-1)^{\sum_{k=1}^{\kappa(\pi)}|R_{r_k}\cap R_{s_k}|}|\mathcal{F}_{r_1,s_1}\right]\right]\right|\\&=\left|\mathbb E\left[(-1)^{\sum_{k=2}^{\kappa(\pi)}|R_{r_k}\cap R_{s_k}|}\mathbb E\left[(-1)^{|R_{r_1}\cap R_{s_1}|}|\mathcal{F}_{r_1,s_1}\right]\right]\right|\\&=\left|\mathbb E\left[(-1)^{\sum_{k=2}^{\kappa(\pi)}|R_{r_k}\cap R_{s_k}|}\l_{r_1,s_1}\right]\right|\\&\leq\mathbb E|\l_{r_1,s_1}|\\&\leq e^{-2^{1-2l}\frac{q_n^2}{n}}+\frac{2^{l+1}l}{n}.\end{align*}
which implies \begin{align*}\lim_{n\to\infty}\mathbb E[(-1)^{\sum_{k=1}^{\kappa(\pi)}|R_{r_k}\cap R_{s_k}|}]=0,\end{align*}
which finishes the proof of Lemma \ref{lemma5}.\end{proof}
\section{Variance and almost sure convergence}\label{almost}
In this section, we will derive an upper bound about the variance of $L_n^{-1}\Tr H^k$. As a direct consequence,  we will prove that $\rho_n\to \rho_\infty$ almost surely in the sense of distribution. Combining this with the results we derived in \S\ref{expected},  we finish the proof of Theorem \ref{main1}.
\begin{lem}\label{lemma6}Let $q_n$ be even and $2\leq q_n<n$,  then $$   \text{var}[L_n^{-1} \Tr H^k]\leq c_k{n\choose q_n}^{-1}$$ for some constant  $c_k$ for any $ k\geq1$.
\end{lem}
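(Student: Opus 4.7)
My plan is to expand the variance as the double sum
\begin{equation*}
\text{var}[L_n^{-1}\Tr H^k] = \frac{i^{q_nk}}{L_n^2\,{n\choose q_n}^k}\sum_{\vec R_1,\vec R_2\in I_n^k} \text{Cov}(J_{\vec R_1}, J_{\vec R_2})\,\Tr(\Psi_{\vec R_1})\,\Tr(\Psi_{\vec R_2}),
\end{equation*}
where $J_{\vec R}:=\prod_{j=1}^k J_{R_j}$ and $\Psi_{\vec R}:=\Psi_{R_1}\cdots\Psi_{R_k}$. Since the $J_R$'s are independent with mean zero and uniformly bounded moments, $\text{Cov}(J_{\vec R_1},J_{\vec R_2})$ vanishes unless at least one $R\in I_n$ appears in both $\vec R_1$ and $\vec R_2$ and every distinct value in the combined tuple $(\vec R_1,\vec R_2)$ appears there at least twice; when nonzero, it is bounded in absolute value by a constant $c_k$. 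Together with $|\Tr\Psi_{\vec R}|\leq L_n$ from \eqref{tr}, the whole problem reduces to counting admissible tuples.

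Let $s$ denote the number of distinct values appearing in $\vec R_1\cup\vec R_2$; necessarily $s\leq k$. Choosing the combinatorial partition structure on $\{1,\ldots,2k\}$ (at most $c_k$ patterns) and then assigning $s$ distinct values from $I_n$ (at most ${n\choose q_n}^s$ ways), at most $c_k\,{n\choose q_n}^s$ tuples contribute. Plugging the crude bounds into the variance formula produces a contribution of size $c_k\,{n\choose q_n}^{s-k}$, which is already $\leq c_k\,{n\choose q_n}^{-1}$ whenever $s\leq k-1$. The only delicate case is $s=k$.

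When $s=k$ every class has size exactly two; let $c$ be the number of \emph{crossing} classes (those with one element in $\vec R_1$ and one in $\vec R_2$). If $c=0$ the covariance vanishes, so $c\geq 1$. Here is the main obstacle: the crude trace bound only yields $c_k$, so one must use the fermionic structure to save an additional factor of ${n\choose q_n}^{-1}$. By the bullet identities of \S\ref{basicfermion}, both traces are nonzero only if $\Psi_{\vec R_j}=\pm I$; the doubled non-crossing values contribute $R\triangle R=\emptyset$ and cancel, reducing this to the single constraint $R_{i_1}\triangle\cdots\triangle R_{i_c}=\emptyset$ on the $c$ distinct crossing values. This rules out $c=1$ (would force $|R|=0$, contradicting $R\in I_n$) and $c=2$ (would force two distinct crossing values to coincide), leaving only $c\geq 3$. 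For $c\geq 3$ the $c$-th crossing value is determined by the previous $c-1$, while the $(k-c)/2$ non-crossing values in each half are free; the total count is at most $c_k\,{n\choose q_n}^{(c-1)+(k-c)}=c_k\,{n\choose q_n}^{k-1}$, yielding again the desired bound $c_k\,{n\choose q_n}^{-1}$. Summing the $s\leq k-1$ and $s=k$ contributions completes the proof.
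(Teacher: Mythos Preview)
Your proposal is correct and follows essentially the same route as the paper's proof. The paper likewise expands the variance, separates the tuples $(\vec R_1,\vec R_2)\in I_n^{2k}$ into those in $P_2(I_n^{2k})$ (your $s=k$ case) and the rest (your $s\leq k-1$ case), disposes of the latter by crude counting, and for the former uses exactly your observation that nonvanishing contribution forces $\Psi_{\vec R_1}=\pm I$, hence $R_{i_1}\triangle\cdots\triangle R_{i_c}=\emptyset$ on the $c=m=|A_0|$ overlapping values, ruling out $c=1,2$ and saving one factor of ${n\choose q_n}$ for $c\geq 3$ by the same ``last value determined by the previous ones'' argument (the paper encodes this as $|B_m|\leq |I_n|^{m-1}$).
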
\begin{proof}Since \begin{equation}\label{mnk1} \frac 1{L_n}    \Tr H^k= \frac 1{L_n} \frac {i^{q_nk/2}}{{n\choose q_n}^{k/2}} \sum_{{R_1},...,  {R_k}\in I_n} J_{R_1}\cdots J_{R_k}  \Tr \Psi_{R_1}\cdots \Psi_{R_k}, \end{equation} we have \begin{align*} \text{var}[L_n^{-1} \Tr H^k]=& \frac 1{L_n^2} \frac {(-1)^{q_nk/2}}{{n\choose q_n}^{k}} \sum_{{R_1},...,  {R_{2k}}\in I_n} \text{cov}(J_{R_1}\cdots J_{R_k},J_{R_{k+1}}\cdots J_{R_{2k}})\cdot\\ & \Tr \Psi_{R_1}\cdots \Psi_{R_k}\Tr \Psi_{R_{k+1}}\cdots \Psi_{R_{2k}}. \end{align*}
For every ${R_1},...,  {R_{2k}}\in I_n $ and $A\in I_n$, let's denote $ \#A=|\{j|1\leq j\leq 2k,R_j=A\}|.$

Since $J_{R_i}$ have uniformly bounded moments, we have  \begin{align*}|\text{cov}(J_{R_1}\cdots J_{R_k},J_{R_{k+1}}\cdots J_{R_{2k}})|&\leq \text{var}[J_{R_1}\cdots J_{R_k}]^{\frac{1}{2}}\text{var}[J_{R_{k+1}}\cdots J_{R_{2k}}]^{\frac{1}{2}}\\ &\leq(\e|J_{R_1}\cdots J_{R_k}|^2)^{\frac{1}{2}}(\e|J_{R_{k+1}}\cdots J_{R_{2k}}|^2)^{\frac{1}{2}}\\&\leq c_{2k}.\end{align*} Furthermore, if some $R_i$  appears only once in $(R_1,\cdots, R_{2k})$, then $\e J_{R_1}\cdots J_{R_{2k}} =0$, $\e J_{R_1}\cdots J_{R_{k}} =0 $ if $1\leq i\leq k$ and $\e J_{R_{k+1}}\cdots J_{R_{2k}} =0 $ if $k+1\leq i\leq 2k,$ thus,
 $\text{cov}(J_{R_1}\cdots J_{R_k},J_{R_{k+1}}\cdots J_{R_{2k}})=\e J_{R_1}\cdots J_{R_{2k}}-\e J_{R_1}\cdots J_{R_{k}}\e J_{R_{k+1}}\cdots J_{R_{2k}} =0 $.
Hence, we can write\begin{align*} \text{var}[L_n^{-1} \Tr H^k]&= \frac 1{L_n^2} \frac {(-1)^{q_nk/2}}{{n\choose q_n}^{k}} \left(\sum_{({R_1},...,  {R_{2k}})\in P_2(I_n^{2k})}+ \sum_{({R_1},...,  {R_{2k}})\in I_n^{2k}\setminus P_2(I_n^{2k}), \# R_i\geq 2}\right)\\ &\text{cov}(J_{R_1}\cdots J_{R_k},J_{R_{k+1}}\cdots J_{R_{2k}})\cdot  \Tr \Psi_{R_1}\cdots \Psi_{R_k}\Tr \Psi_{R_{k+1}}\cdots \Psi_{R_{2k}}\\&:=V_1+V_2. \end{align*}Let's denote $k_n':=|\{({R_1},...,  {R_{2k}})\in I^{2k}_n\setminus P_2(I_n^{2k}), \# R_i\geq 2 \}|$.
If $({R_1},...,  {R_{2k}})\in I^{2k}_n\setminus P_2(I_n^{2k})$ with $\# R_i\geq 2 $, then $|\{R_j|1\leq j\leq 2k\}|\leq k-1$. For $n$ large enough, we have \begin{align*} k_n'&\leq\sum_{B\subseteq I_n,|B|=k-1}|\{({R_1},...,  {R_{2k}})\in I^{2k}_n|  R_i\in B,\ \forall\ 1\leq i\leq 2k \}|\\&=\sum_{B\subseteq I_n,|B|=k-1}({k-1})^{2k}=({k-1})^{2k}{|I_n|\choose { {k-1}}}\leq c_k |I_n|^{ {k-1}}. \end{align*}
   By \eqref{tr},  we have\begin{align*}|V_2|\leq \frac 1{{n\choose q_n}^{k}} \sum_{({R_1},...,  {R_{2k}})\in I_n^{2k}\setminus P_2(I_n^{2k}), \# R_i\geq 2} c_{2k}\leq  \frac {c_k}{{n\choose q_n}^{k}} |I_n|^{k-1}=c_k{n\choose q_n}^{-1}. \end{align*}
   Now we turn to estimate $V_1$.  For $({R_1},...,  {R_{2k}})\in P_2(I_n^{2k}) $, we denote $A_1:=\{R_j|1\leq j\leq k\},\ A_2:=\{R_j|k+1\leq j\leq 2k\}$ and  $A_0:=A_1\cap A_2$. Then we can decompose $P_2(I_n^{2k})=\cup_{j=0}^2P_{2,j}(I_n^{2k}) $ where\begin{align*} P_{2,0}(I_n^{2k})&=\{({R_1},...,  {R_{2k}})\in P_2(I_n^{2k})|A_0=\emptyset\},\\P_{2,1}(I_n^{2k})&=\{({R_1},...,  {R_{2k}})\in P_2(I_n^{2k})|A_0\neq\emptyset,\ \Psi_{R_1}\cdots \Psi_{R_k}=\pm I\},\\P_{2,2}(I_n^{2k})&=\{({R_1},...,  {R_{2k}})\in P_2(I_n^{2k})|\Psi_{R_1}\cdots \Psi_{R_k}\neq\pm I\}. \end{align*}If $({R_1},...,  {R_{2k}})\in P_{2,0}(I_n^{2k}) $, then $J_{R_1}\cdots J_{R_k}$ and $J_{R_{k+1}}\cdots J_{R_{2k}}$ are independent, hence,  $\text{cov}(J_{R_1}\cdots J_{R_k},J_{R_{k+1}}\cdots J_{R_{2k}})=0.$ If $({R_1},...,  {R_{2k}})\in P_{2,2}(I_n^{2k}) $ then\\ $\Tr \Psi_{R_1}\cdots \Psi_{R_k}=0 $ as we discussed in \S \ref{basicfermion}.  Therefore,  by \eqref{tr} again,  we have\begin{align*} |V_1|&=\frac 1{L_n^2}  \frac {1}{{n\choose q_n}^{k}} \left|\sum_{({R_1},...,  {R_{2k}})\in P_{2,1}(I_n^{2k})} \text{cov}(J_{R_1}\cdots J_{R_k},J_{R_{k+1}}\cdots J_{R_{2k}})\cdot\right.\\  &\left.\Tr \Psi_{R_1}\cdots \Psi_{R_k}\Tr \Psi_{R_{k+1}}\cdots \Psi_{R_{2k}}\right|\\ &\leq \frac {c_{2k}}{{n\choose q_n}^{k}} \sum_{({R_1},...,  {R_{2k}})\in P_{2,1}(I_n^{2k})}1= c_{2k}{n\choose q_n}^{-k}{|P_{2,1}(I_n^{2k})|}. \end{align*}
Now we estimate $ |P_{2,1}(I_n^{2k})|$. Let $m=|A_0|>0,$ then there exists $1\leq i_1<\cdots<i_m\leq k$ and $k+1\leq i_1'<\cdots<i_m'\leq 2k$ such that $A_0=\{R_{i_1},\cdots,R_{i_m}\}=\{R_{i_1'},\cdots,R_{i_m'}\}.$ For every $A\in A_1\setminus A_0$, $A$ appears exactly twice in $({R_1},...,  {R_{k}})$ and $ k-m=2|A_1\setminus A_0|$ is even. Similarly $ k-m=2|A_2\setminus A_0|.$ Moreover, we have $\Psi_{R_1}\cdots \Psi_{R_k}=\pm \Psi_{R_{i_1}}\cdots \Psi_{R_{i_m}} $ and \begin{align*} P_{2,1}(I_n^{2k})&=\{({R_1},...,  {R_{2k}})\in P_2(I_n^{2k})|m>0,\ \Psi_{R_{i_1}}\cdots \Psi_{R_{i_m}}=\pm I\}. \end{align*}For every fixed $A_0$, there are $ {|I_n|-m\choose k-m}{k-m\choose \frac{k-m}{2}}$ choices of $ (A_1\setminus A_0,A_2\setminus A_0)$. For every fixed $A_0,\ A_1\setminus A_0$ and $A_2\setminus A_0,$ there are $ (k!/2^{\frac{k-m}{2}})^2=(k!)^2/2^{{k-m}}$ choices of $({R_1},...,  {R_{2k}}). $ Let's denote \begin{equation}\label{bm}B_m=\{({R_1},...,  {R_{m}})\in I_n^{m}|\Psi_{R_1}\cdots \Psi_{R_m}=\pm I,\ R_i\neq R_j,\ \forall\ 1\leq i<j\leq m\}.\end{equation}  Then for fixed $m$, every $A_0$ corresponds exactly $m!$ elements in $B_m,$ thus the number of elements in $P_{2,1}(I_n^{2k}) $ satisfying $|A_0|=m$ is ${|I_n|-m\choose k-m}{k-m\choose \frac{k-m}{2}}(k!)^2/(2^{{k-m}}m!)\cdot|B_m| $ (or $0$ if $k-m$ is odd). It remains to estimate $|B_m|.$

 We first have $B_1=B_2=\emptyset$ by definition. For $m\geq 3$, if $({R_1},...,{R_{m-1}},  {R_{m}})$ and $({R_1},...,{R_{m-1}},  {R_{m}'})$ are both in $B_m $, then $\Psi_{R_m}=\pm \Psi_{R_m'} $, then we must have ${R_m}={R_m'}$ by the discussion in \S \ref{basicfermion}. Therefore,  every $({R_1},...,{R_{m}})\in B_m $ is uniquely determined by its first $m-1$ components $({R_1},...,{R_{m-1}})$ and hence  $ |B_m|\leq |I_n|^{m-1}.$ Now we have \begin{align*} |P_{2,1}(I_n^{2k})|&=\sum_{0<m\leq k,2|k-m}{|I_n|-m\choose k-m}{k-m\choose \frac{k-m}{2}}(k!)^2/(2^{{k-m}}m!)\cdot|B_m|\\&\leq\sum_{0<m\leq k,2|k-m}c_{k,m}|I_n|^{k-m}\cdot|B_m|\\&\leq\sum_{0<m\leq k,2|k-m}c_{k,m}|I_n|^{k-m}|I_n|^{m-1}\\&=c_k|I_n|^{k-1}. \end{align*} Therefore, using $|I_n|={n\choose q_n}$, we finally have \begin{align*} \text{var}[L_n^{-1} \Tr H^k]&=V_1+V_2\leq c_{2k}{n\choose q_n}^{-k}{|P_{2,1}(I_n^{2k})|} +c_k{n\choose q_n}^{-1}\\  &\leq c_k{n\choose q_n}^{-1}. \end{align*}This completes the proof of Lemma \ref{lemma6}.\end{proof}
Now by Lemma \ref{lemma6},  for even $2\leq q_n<n$, we have
\begin{align*} &\e \sum_n|L_n^{-1} \Tr H^k-L_n^{-1}\e[\Tr H^k]|^2\\&=\sum_n\text{var}[L_n^{-1} \Tr H^k]\\ &\leq \sum_n c_k{n\choose q_n}^{-1}\leq \sum_n c_k{n\choose 2}^{-1}<+\infty. \end{align*} Therefore, we have
 $$\sum_n|L_n^{-1} \Tr H^k-L_n^{-1}\e[\Tr H^k]|^2<+\infty,\ a.s. $$ and hence, $$\lim\limits_{n\to+\infty}|L_n^{-1} \Tr H^k-L_n^{-1}\e[\Tr H^k]|=0,\ a.s. $$ Since $L_n^{-1} \Tr H^k=\langle x^k, \rho_n\rangle $ and $\lim\limits_{n\to+\infty}L_n^{-1}\e[\Tr H^k]=\langle x^k, \rho_{\infty}\rangle, $   we have $$\lim\limits_{n\to+\infty}\langle x^k,  \rho_n \rangle=\langle x^k, \rho_{\infty} \rangle,\ a.s. $$ The sequence  $\rho_n(\lambda) $ is tight almost surely and every limiting distribution of its subsequence has the same $k$-th moment as $\rho_{\infty}(\lambda)$ almost surely. Since $\rho_{\infty}(\lambda) $ satisfies Carleman's condition,  every limiting distribution must be $\rho_{\infty}(\lambda)$ almost surely  and this implies $$\lim\limits_{n\to+\infty}   \rho_n(\lambda)= \rho_{\infty}(\lambda),\ a.s. $$ in the sense of distribution. Therefore, we finish the proof of Theorem \ref{main1}.

\section{Proof of Theorem \ref{main2}}\label{theorem2proof}
In this section, we will sketch the proof of Theorem \ref{main2} for the case when $q_n$ is odd. The proof is similar to that of the even case in Theorem \ref{main1} .

For $q_n$ odd, we  consider the Hermitian matrices $$H=i^{(q_n-1)/2}\frac{1}{\sqrt{{{n} \choose {q_n}}}}\sum_{1\leq i_1<i_2< \cdots <i_{q_n} \leq n} J_{i_1i_2\cdots i_{q_n}}{\p}_{i_1}{\p}_{i_2}\cdots {\p}_{i_{q_n}}.$$
Let's first prove Theorem \ref{main2} for the simplest case when $q_n=1$. For such case, we have $$ H^2=\frac{1}{n}\sum\limits_{j=1}^nJ_j^2,\,\,\,\,\Tr H=0,$$ therefore, we have $$ \rho_n(\lambda)=\frac{1}{2}(\delta_{\sqrt{a_n}}(\lambda)+\delta_{-\sqrt{a_n}}(\lambda)) $$ with $$a_n=\frac{1}{n}\sum\limits_{j=1}^nJ_j^2.$$   By the law of large numbers, we have $a_n\to 1\ a.s.$,  this implies $$ \rho_n\to\frac{1}{2}(\delta_{1}+\delta_{-1}). $$ 
For the odd case with $1\leq q_n\leq n/2,$    Lemma \ref{lemma6} is still true (with $q_nk/2 $ replaced by $(q_n-1)k/2 $ in the proof).  Therefore, we only need to prove
  $$\lim\limits_{n\to+\infty}\langle x^k, \e \rho_n \rangle=\langle x^k, \rho_{\infty} \rangle. $$
As before, we have \begin{equation}\label{mnk2}m_{n,k}   = \frac 1{L_n}   \e[ \Tr H^k]= \frac {i^{(q_n-1)k/2}}{L_n{n\choose q_n}^{k/2}} \sum_{{R_1},...,  {R_k}\in I_n} \e[J_{R_1}\cdots J_{R_k}]  \Tr \Psi_{R_1}\cdots \Psi_{R_k}. \end{equation}We still  divide the above summation  as\begin{equation}\label{deco1}m_{n,k}=\sum_{({R_1},...,  {R_k})\in  P_2(I_n^k)}+\sum_{({R_1},...,  {R_k})\in I^k_n\setminus P_2(I_n^k)}.  \end{equation}If $k$ is odd, then we have $P_2(I_n^k)=\emptyset$, and thus  the second inequality in Lemma \ref{lemma1} together with \eqref{tr} will imply that
$$|m_{n,k}|\leq c_k{n\choose q_n}^{-1/2} \to 0.$$
All the rest is to estimate the moment $m_{n,k}$ when $k$ is even.
Similarly, Lemma \ref{lemma1} also implies that the second summation in   \eqref{deco1}  satisfies
$$| \sum_{({R_1},...,  {R_k})\in I^k_n\setminus P_2(I_n^k)}|\to 0.$$
First, for the case $q_n>1$ and $\ q_n^2/n\to a\in[0,+\infty)$, we can rewrite
\begin{align*}\sum _ {P_2(I_n^k)}= \frac 1{L_n} \frac {i^{(q_n-1)k/2}}{{n\choose q_n}^{k/2}} \sum_{\pi\in S_k} \sum_{R_1,\cdots, R_{\frac k2}\in I_n, R_i \neq R_j \,\mbox{if}\,\, i\neq j}   \frac{\Tr \Psi_{R_{\pi(1)}}\cdots \Psi_{R_{\pi(k)}}}{(k/2)!}.
\end{align*}Now Lemma \ref{lemma3} is replaced by $$\frac {i^{(q_n-1)k/2}} {L_n} \Tr  \Psi_{R_{\pi(1)}}\cdots \Psi_{R_{\pi(k)}}=(-1)^{\sum_{k=1}^{\kappa(\pi)}(|R_{r_k}\cap R_{s_k}|+1)}.$$ If we combine this with Lemma \ref{lemma4}, for any fixed  map $\pi$ and odd $q_n$, we have
\begin{align*}&\lim_{n\to \infty}  \frac 1{L_n} \frac {i^{(q_n-1)k/2}}{{n\choose q_n}^{k/2}} \sum_{R_1,\cdots, R_{\frac k2}\in I_n, R_i \neq R_j \,\mbox{if}\,\, i\neq j}   \Tr \Psi_{R_{\pi(1)}}\cdots \Psi_{R_{\pi(k)}}=(-e^{-2a})^{\kappa(\pi)}.
\end{align*}
Therefore, we have
\begin{equation}\label{malk}\lim_{n\to \infty}\sum _ {P_2(I_n^k)}
 = \frac{1}{(k/2)!}\sum_{\pi\in S_k} (-e^{-2a})^{\kappa(\pi)}:=\tilde m_k^a.
\end{equation}
Let $\tilde m_k^a=0 $ for $k$ odd, then we have proved
$$\lim_{n\to\infty} m_{n,k}=
   \tilde m^a_k. $$For the case $a=0$, we can check directly that $\tilde m^a_k=1$ for $k=2,4$ (in fact for all $k$ even). Therefore, we have $\langle x,\rho_{\infty}\rangle=0,\ \langle x^2,\rho_{\infty}\rangle=\langle x^4,\rho_{\infty}\rangle=1$ and $ \langle (x^2-1)^2,\rho_{\infty}\rangle=0$, this implies $\rho_\infty=\frac{1}{2}(\delta_{1}+\delta_{-1})$. For the case $a>0$, we can use the theory of $q$-Hermite polynomials as in the proof of case 2 in Theorem \ref{main1} to conclude the result. 
   
   In the end, for the case $q_n^2/n\to \infty$ and $q_n\leq n/2,$ we can use Lemma \ref{lemma5} to conclude the result, and thus we finish the proof of Theorem \ref{main2}.

\section{Largest eigenvalue}\label{largg}

Now let's prove Theorem \ref{2case}  in  detail.  Although many facts are known to physicists  when $q_n=2$ \cite{black, GV3, MS}, we still prove \eqref{expressions} \eqref{ddsds} to make the article self-contained.
 


 Let $B=(J_{ij})_{1\leq i,j\leq n},\ J_{ji}=-J_{ij} $ be the real antisymmetric matrix. We assume the eigenvalues of $B$ are $\pm i\mu_j$ where  $\mu_j\geq 0$ for $ 1\leq j\leq n/2$. Then there exists an orthogonal matrix $A=(a_{ij})_{1\leq i,j\leq n}\in O(n)$ such that $ A^TBA=C=(c_{ij})_{1\leq i,j\leq n}$ where $c_{2j-1,2j}=\mu_j, \ c_{2j,2j-1}=-\mu_j$ and other values of $c_{ij}$ are 0. Now we have $B=ACA^T$ and $J_{ij}=\sum\limits_{1\leq k,l\leq n}a_{ik}c_{kl}a_{jl}.$ Therefore, we can rewrite \begin{align*}H&=\frac{i}{\sqrt{{{n} \choose {2}}}}\sum_{1\leq i_1<i_2\leq n} J_{i_1i_2}{\p}_{i_1}{\p}_{i_2}=\frac{i}{2\sqrt{{{n} \choose {2}}}}\sum_{1\leq i_1,i_2\leq n} J_{i_1i_2}{\p}_{i_1}{\p}_{i_2}\\&=\frac{i}{2\sqrt{{{n} \choose {2}}}}\sum_{1\leq i_1,i_2,k,l\leq n} a_{i_1k}c_{kl}a_{i_2l}{\p}_{i_1}{\p}_{i_2}\\&=\frac{i}{2\sqrt{{{n} \choose {2}}}}\sum_{1\leq k,l\leq n} c_{kl}\left(\sum_{i_1=1}^{n}a_{i_1k}{\p}_{i_1}\right)\left(\sum_{i_2=1}^{n}a_{i_2l}{\p}_{i_2}\right)\\&=\frac{i}{2\sqrt{{{n} \choose {2}}}}\sum_{1\leq k,l\leq n} c_{kl}\widetilde{\p}_{k}\widetilde{\p}_{l}=\frac{i}{\sqrt{{{n} \choose {2}}}}\sum_{j=1}^{n/2} \mu_{j}\widetilde{\p}_{2j-1}\widetilde{\p}_{2j},\end{align*}where $\widetilde{\p}_{k}=\sum\limits_{j=1}^{n}a_{jk}{\p}_{j}.$ Since $A$ is orthogonal,  we have\begin{align*}\{\widetilde{\p}_{k},\widetilde{\p}_{l}\}&=\sum_{1\leq i_1,i_2\leq n}a_{i_1k}a_{i_2l}\{\widetilde{\p}_{i_1},\widetilde{\p}_{i_2}\}=\sum_{1\leq i_1,i_2\leq n}a_{i_1k}a_{i_2l}(2\delta_{i_1i_2})\\&=\sum_{1\leq i_1\leq n}2a_{i_1k}a_{i_1l}=2\delta_{kl}.\end{align*}Thus,  $\widetilde{\p}_j$ are also Majorana fermions. Since ${\p}_j$ is Hermitian for all $1\leq j\leq n$, so is $\widetilde{\p}_j$. Let $A_j=i\widetilde{\p}_{2j-1}\widetilde{\p}_{2j},$ then $A_j$ is also Hermitian. Furthermore, since $A_j^2=I$ and $A_jA_k=A_kA_j,$ the eigenvalues of $A_j$ are $\pm1$, and the eigenvalues of $H$ are in the form of \begin{equation}\label{expressions}{{n} \choose {2}}^{-\frac{1}{2}}\sum\limits_{j=1}^{n/2}\pm \mu_{j}.\end{equation}In particular,  $\lambda_{max}\leq {{n} \choose {2}}^{-\frac{1}{2}}\sum\limits_{j=1}^{n/2} \mu_{j}.$ Now we show that the equality holds.

 By $A_j^2=I$ and  $A_jA_k=A_kA_j$ again, $A_j\ (1\leq j\leq n/2)$ have a common eigenvector $0\neq e_0\in \mathbb{C}^{L_n}$ and $A_je_0=\pm e_0.$  Let   $R=\{2j|A_je_0=- e_0\}$ and define $\widetilde\Psi_R$ as usual, then we have $A_j\widetilde{\Psi}_R=\widetilde{\Psi}_RA_j$ if $A_je_0=e_0$ and $A_j\widetilde{\Psi}_R=-\widetilde{\Psi}_RA_j$ if $A_je_0=-e_0$. Let $e_1=\widetilde{\Psi}_Re_0$, then $e_1\neq 0$ and $A_je_1=e_1$. Thus we have $$ He_1=\left[{{n} \choose {2}}^{-\frac{1}{2}}\sum\limits_{j=1}^{n/2} \mu_{j}\right]e_1,$$ i.e., $ {{n} \choose {2}}^{-\frac{1}{2}}\sum\limits_{j=1}^{n/2} \mu_{j}$ is an eigenvalue of $H$ and hence 
\begin{equation}\label{ddsds}\lambda_{max}= {{n} \choose {2}}^{-\frac{1}{2}}\sum\limits_{j=1}^{n/2} \mu_{j}.\end{equation}
  To study $\lambda_{max}, $ we need the semicircle law of the eigenvalues $\pm \mu_j.$ Let \begin{equation}
	  \rho_n^*(\lambda)=\frac 1{n}\sum_{j=1}^{n/2} (\delta_{\mu_{j}/\sqrt{n-1}}(\la)+\delta_{-\mu_{j}/\sqrt{n-1}}(\la)).	 \end{equation}
	
	  The following lemma regarding the distribution of eigenvalues of random antisymmetric  matrices is standard  \cite{M},
\begin{lem}\label{lemma7}$\rho_n^*\to\frac{1}{2\pi}\sqrt{4-x^2} \mbox{\Large$\chi$}_{[-2,2]}\ a.s.$ in the sense of distribution.
\end{lem}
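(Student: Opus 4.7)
The plan is to reduce Lemma~\ref{lemma7} to the classical Wigner semicircle law, which is exactly how \cite{M} treats real antisymmetric matrices. The key observation is that the matrix $M:=iB$ is Hermitian: since $B$ is real antisymmetric, $(iB)^{*}=-iB^{T}=-i(-B)=iB$. The eigenvalues of $B$ are the purely imaginary numbers $\pm i\mu_{j}$ with $\mu_{j}\geq 0$, hence the eigenvalues of $M$ are exactly the real numbers $\pm\mu_{j}$, $1\leq j\leq n/2$. Therefore the empirical spectral distribution of the Hermitian matrix $M/\sqrt{n-1}$ coincides with $\rho_{n}^{*}$.

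Next I would verify that $M$ satisfies the standard Wigner hypotheses. For $i<j$ the entry $M_{ij}=iJ_{ij}$ is purely imaginary with $\mathbb{E}[M_{ij}]=0$ and $\mathbb{E}|M_{ij}|^{2}=\mathbb{E}[J_{ij}^{2}]=1$; for $i>j$ we have $M_{ij}=\overline{M_{ji}}$; the diagonal is zero; and the entries above the diagonal are independent. The uniform moment bound $\mathbb{E}|J_{ij}|^{k}\leq c_{k}$ transfers verbatim. Thus $M$ is a Hermitian Wigner matrix in the standard sense (the fact that the off-diagonal entries are purely imaginary rather than complex is immaterial; only the mean and second moment enter the semicircle statement). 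By the classical almost sure Wigner semicircle theorem (see e.g.\ \cite{AGZ,T} or \cite{M} for the antisymmetric formulation), the empirical spectral distribution of $M/\sqrt{n}$ converges almost surely in distribution to $\tfrac{1}{2\pi}\sqrt{4-x^{2}}\,\chi_{[-2,2]}$.

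Finally I would pass from the normalization $\sqrt{n}$ to $\sqrt{n-1}$. Since $\sqrt{n-1}/\sqrt{n}\to 1$, the pushforward of the semicircle measure under the scaling $x\mapsto x\sqrt{n}/\sqrt{n-1}$ converges (in the weak topology on a compact interval) to the same semicircle measure; equivalently, for any bounded continuous $f$, $\int f\,d\rho_{n}^{*}-\int f(\cdot\sqrt{n}/\sqrt{n-1})\,d(\text{ESD of }M/\sqrt{n})\to 0$ almost surely by uniform continuity of $f$ on compact sets together with the almost sure tightness provided by Wigner's theorem. This gives the claimed almost sure convergence of $\rho_{n}^{*}$ to the semicircle density.

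If one prefers a self-contained argument, the moment method of \S\ref{expected}--\S\ref{almost} applies directly: for $k$ even, $\int x^{k}\,d\rho_{n}^{*}=\frac{1}{n(n-1)^{k/2}}\Tr M^{k}$, and the standard combinatorial count of closed walks on the Wigner graph, together with a variance estimate of the same flavor as Lemma~\ref{lemma6} (using independence of $\{J_{ij}\}_{i<j}$ and the Borel--Cantelli lemma), produces the Catalan numbers as the limiting even moments and zero for odd $k$. The only mildly delicate point is the variance bound, but here the matrix is dense in the Wigner sense (not sparse as in the SYK case), so the standard $O(n^{-2})$ variance estimate applies and no new difficulty arises; this is the reason the authors are content to quote \cite{M}.
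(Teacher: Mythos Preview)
The paper does not give a proof of Lemma~\ref{lemma7}; it simply declares the result ``standard'' and cites Mehta~\cite{M}. Your argument is correct and is exactly the usual reduction: the map $B\mapsto iB$ turns the real antisymmetric matrix into a Hermitian matrix whose spectrum is $\{\pm\mu_j\}$, so $\rho_n^*$ is the empirical spectral distribution of $M/\sqrt{n-1}$ with $M=iB$; the upper-triangular entries $iJ_{ij}$ are independent, mean~$0$, unit variance, with all moments bounded, so the almost sure Wigner semicircle law from \cite{AGZ,T,M} applies. The passage from the $\sqrt{n}$ to the $\sqrt{n-1}$ normalization is trivial as you note. Your proposal therefore supplies precisely what the paper omits by citation, and your closing remark that ``the authors are content to quote \cite{M}'' is accurate.
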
 Since \begin{align*}\langle x^2, \rho_n^* \rangle=\frac 2{n}\sum_{j=1}^{n/2} \left|\frac{\mu_{j}}{\sqrt{n-1}}\right|^2=\frac 2{n(n-1)}\sum_{j=1}^{n/2} {\mu_{j}}^2=\frac 2{n(n-1)}\sum_{1\leq i_1<i_2\leq n}J_{i_1i_2} ^2,\end{align*}we have \begin{align*}\e\langle x^2, \rho_n^* \rangle=\frac 2{n(n-1)}\sum_{1\leq i_1<i_2\leq n}\e J^2_{i_1i_2}=\frac 2{n(n-1)}{{n} \choose {2}}=1.\end{align*}Thus $x$ is uniformly integrable with respect to $\rho_n^* \ a.s.$ and $\langle |x|, \rho_n^* \rangle$ is uniformly integrable. Moreover, we have\begin{align*}\lambda_{max}= {{n} \choose {2}}^{-\frac{1}{2}}\sum\limits_{j=1}^{n/2} \mu_{j}= {{n} \choose {2}}^{-\frac{1}{2}}\frac{n}{2}\sqrt{n-1}\langle |x|, \rho_n^* \rangle= \sqrt{\frac{n}{2}}\langle |x|, \rho_n^* \rangle.\end{align*}By Lemma \ref{lemma7}, we have \begin{align*}\lim_{n\to+\infty}\langle |x|, \rho_n^* \rangle=\frac{1}{2\pi}\int_{-2}^2|x|\sqrt{4-x^2}dx=\frac{8}{3\pi}\ a.s.\ \text{and in}\ L^1.\end{align*}Thus we have \begin{align*}\lim_{n\to+\infty}\frac{\lambda_{max}}{\sqrt{n}}=\frac{4\sqrt{2}}{3\pi}\ a.s.\ \text{and in}\ L^1, \end{align*}
which finishes the proof of Theorem \ref{2case}.

\subsection{Proof of Theorem \ref{gausscase}}
Now let's finish the proof  of Theorem \ref{gausscase}. 
 
\begin{proof}
We first need the following estimate
\begin{equation}\label{claim1}\frac 1{L_n}\e [\Tr {H}^{k}]\leq \e\left(\frac{1}{\sqrt{{{n} \choose {q_n}}}}\sum\limits_{1\leq i_1<i_2< \cdots <i_{q_n} \leq n} J_{i_1i_2\cdots i_{q_n}}\right)^k.\end{equation}
To prove \eqref{claim1}, we first note \begin{equation}\label{mnk2} \frac 1{L_n}   \e[ \Tr H^k]= \frac 1{L_n} \frac {i^{q_nk/2}}{{n\choose q_n}^{k/2}} \sum_{{R_1},...,  {R_k}\in I_n} \e[J_{R_1}\cdots J_{R_k}]  \Tr \Psi_{R_1}\cdots \Psi_{R_k} \end{equation}and\begin{equation}\label{mnk3}\e\left(\frac{1}{\sqrt{{{n} \choose {q}}}}\sum\limits_{1\leq i_1<i_2< \cdots <i_{q_n} \leq n} J_{i_1i_2\cdots i_{q_n}}\right)^k= \frac {1}{{n\choose q_n}^{k/2}} \sum_{{R_1},...,  {R_k}\in I_n} \e[J_{R_1}\cdots J_{R_k}].\end{equation}
Since $\e[J_{R_1}\cdots J_{R_k}]\geq0 $ is always true,  then the estimate \eqref{claim1} follows \eqref{tr}.

For the partition function defined in \eqref{parti}, for any $ \beta>0$, we have the estimate
\begin{align*}\e Z(-\beta)&=\sum_{k=0}^{+\infty}\frac{\beta^{k}}{k!}\e [\Tr {H}^{k}]\\&\leq L_n\sum_{k=0}^{+\infty}\frac{\beta^{k}}{k!}\e\left(\frac{1}{\sqrt{{{n} \choose {q_n}}}}\sum\limits_{1\leq i_1<i_2< \cdots <i_{q_n} \leq n} J_{i_1i_2\cdots i_{q_n}}\right)^k\\&=L_n\e\exp\left(\frac{\beta}{\sqrt{{{n} \choose {q_n}}}}\sum\limits_{1\leq i_1<i_2< \cdots <i_{q_n} \leq n} J_{i_1i_2\cdots i_{q_n}}\right)\\&=L_ne^{\beta^2/2}.\end{align*}Here, we used the fact that $\frac{1}{\sqrt{{{n} \choose {q_n}}}}\sum\limits_{1\leq i_1<i_2< \cdots <i_{q_n} \leq n} J_{i_1i_2\cdots i_{q_n}} $ is the standard Gaussian random variable.

By definition of the partition function again, we further have $$\e Z(-\beta)\geq\e(e^{\beta \la_{max}}).$$ Then by Jensen's inequality, for any $ \beta>0$, we have \begin{align*}\beta\e \la_{max}\leq \ln\e(e^{\beta \la_{max}})\leq\ln \e Z(-\beta)\leq \ln(L_ne^{\beta^2/2})=\ln L_n+\beta^2/2.\end{align*} Choosing $ \beta=\sqrt{2\ln L_n}=\sqrt{2\ln 2^{n/2}}=\sqrt{2\cdot n/2\cdot\ln 2}=\sqrt{n\ln 2}$,  we have$$\e \la_{max}\leq (\ln L_n+\beta^2/2)/\beta=(\beta^2/2+\beta^2/2)/\beta=\beta=\sqrt{n\ln 2},$$
which finishes Theorem \ref{gausscase}. \end{proof}

\end{document}